\newtheorem{theorem}{Theorem}
\newtheorem{remark}{Remark}
\newcommand{\RNum}[1]{\uppercase\expandafter{\romannumeral #1\relax}}
\newtheorem{lemma}{Lemma}
\newtheorem{definition}{Definition}
\definecolor{lime}{HTML}{A6CE39}
\titlespacing{\section}{0pt}{1.2ex plus .0ex minus .0ex}{.3ex plus .0ex}
\titlespacing{\subsection}{0pt}{1.2ex plus .0ex minus .0ex}{.3ex plus .0ex}
\DeclareRobustCommand{\orcidicon}{%
	\begin{tikzpicture}
	\draw[lime, fill=lime] (0,0) 
	circle [radius=0.16] 
	node[white] {{\fontfamily{qag}\selectfont \tiny ID}};    \draw[white, fill=white] (-0.0625,0.095) 
	circle [radius=0.007];    \end{tikzpicture}
	\hspace{-2mm}}
\xdef\csname orcid\x\endcsname{\noexpand\href{https://orcid.org/\csname orcidauthor\x\endcsname}{\noexpand\orcidicon}}
\newcommand*\bigcdot{\mathpalette\bigcdot@{.5}}
\newcommand*\bigcdot@[2]{\mathbin{\vcenter{\hbox{\scalebox{#2}{$\m@th#1\bullet$}}}}}
\begin{document}
\title{Goal-oriented Tensor: Beyond Age of Information Towards Semantics-Empowered Goal-Oriented Communications}
\author{Aimin Li\orcidB{}, 
	\emph{Graduate Student Member, IEEE,}
	 Shaohua Wu\orcidC{}, 
	 \emph{Member, IEEE,}\\
    Sumei Sun\orcidD{}, 
	 \emph{Fellow, IEEE}, and Jie Cao\orcidE{},
    \emph{Member, IEEE} }
%
%
%

\maketitle
\allowdisplaybreaks
\begin{abstract}
Optimizations premised on open-loop metrics such as Age of Information (AoI) indirectly enhance the system's decision-making \emph{utility}. We therefore propose a novel closed-loop metric named Goal-oriented Tensor (GoT) to directly quantify the impact of semantic mismatches on goal-oriented decision-making \emph{utility}. Leveraging the GoT, we consider a \emph{sampler \& decision-maker} pair that works collaboratively and distributively to achieve a shared goal of communications. We formulate a two-agent infinite-horizon Decentralized Partially Observable Markov Decision Process (Dec-POMDP) to conjointly deduce the optimal deterministic sampling policy and decision-making policy. To circumvent the \emph{curse of dimensionality} in obtaining an optimal deterministic joint policy through Brute-Force-Search, a sub-optimal yet computationally efficient algorithm is developed. This algorithm is predicated on the search for a Nash Equilibrium between the sampler and the decision-maker. Simulation results reveal that the proposed \emph{sampler \& decision-maker} co-design surpasses the current literature on AoI and its variants in terms of both goal achievement \emph{utility} and sparse sampling rate, signifying progress in the semantics-conscious, goal-driven sparse sampling design.  
\end{abstract}
\begin{IEEEkeywords}
 Goal-oriented communications, Goal-oriented Tensor, Status updates, Age of Information, Age of Incorrect Information, Value of Information, Semantics-aware sampling.
\end{IEEEkeywords}

\IEEEpeerreviewmaketitle

\section{Introduction}\label{sectionI}
The recent advancement of the emerging 5G and beyond has spawned the proliferation of both theoretical development and application instances for Internet of Things (IoT) networks. In such networks, timely status updates are becoming increasingly crucial for enabling real-time monitoring and actuation across a plethora of applications. To address the inadequacies of traditional throughput and delay metrics in such contexts, the \emph{Age of Information} (AoI) has emerged as an innovative metric to capture the data freshness perceived by the receiver \cite{kaul2011minimizing}, defined as $\mathrm{AoI}(t)=t-U(t)$,
where $U(t)$ denotes the generation time of the latest received packet before time $t$. Since its inception, AoI has garnered significant research attention and has been extensively analyzed in the queuing systems \cite{sun2017update,costa2016age,yates2018age,kam2015effect,bedewy2019age,moltafet2020age}, physical-layer communications \cite{Sac2018AgeOptimalCC, xie2020age,you2021age,meng2022analysis,pan2022age,Maatouk2019MinimizingTA,wu2022minimizing}, MAC-layer communications \cite{ceran2019average,9686609,Peng2020AgeOI, PanTVT2022,li2022age}, industrial IoT \cite{BZJSSU, XinXie2023}, energy harvesting systems\cite{AbdE2022,pappas,yates2015lazy,arafa2019age}, and etc. (see \cite{yatesAgeInformationIntroduction2021} and the references therein for more comprehensive review). These research efforts are driven by the consensus that a freshly received message typically contains more valuable information, thereby enhancing the \emph{utility} of decision-making processes.

Though AoI has been proven efficient in many freshness-critical applications, it exhibits several critical shortcomings. Specifically, ($a$) AoI does not provide a direct measure of information value; 
 ($b$) AoI does not consider the content dynamics of source data and ignores the effect of End-to-End (E2E) information mismatch on the decision-making process. 
 
 To address shortcoming (a), a typical approach is to impose a non-linear penalty on AoI\cite{sun2019sampling,kosta2020cost,cho2003effective,azar2018tractable,bastopcu2020information}. In \cite{sun2019sampling}, the authors map the AoI to a non-linear and non-decreasing function $f(\mathrm{AoI}(t))$ to evaluate the degree of ``\emph{discontent}'' resulting from \emph{stale} information. Subsequently, the optimal sampling policy is deduced for an arbitrary non-decreasing penalty function. The authors in \cite{kosta2020cost} introduce two penalty functions, namely the exponential penalty function $a^{\mathrm{AoI}(t)}-1$ and the logarithmic penalty function $\log_a{\left(\mathrm{AoI}(t+1)\right)}$, for evaluating the \emph{Cost of Update Delay} (CoUD). In \cite{cho2003effective} and \cite{azar2018tractable}, the binary indicator function $\mathbbm{1}_{\left\{\mathrm{AoI(t)>d}\right\}}$ is applied to evaluate whether the most recently received message is up-to-date. Specifically, the penalty assumes a value of $1$ when the instantaneous AoI surpasses a predetermined threshold $d$; otherwise, the penalty reverts to $0$. The freshness of web crawling is evaluated through this AoI-threshold binary indicator function. In \cite{bastopcu2020information}, an analogous binary indicator approach is implemented in caching systems to evaluate the freshness of information. 

The above works tend to tailor a particular non-linear penalty function to evaluate the information value. However, the selection of penalty functions in the above research relies exclusively on empirical configurations, devoid of rigorous derivations. To this end, several information-theoretic techniques strive to explicitly derive the non-linear penalty function in terms of AoI \cite{truong2013effects,8445873,wangFrameworkCharacterisingValue2021,9781393}. One such quintessential work is the auto-correlation function $\mathbb{E}\left[X_t^*X_{t-\mathrm{AoI}(t)}\right]$, which proves to be a non-linear function of AoI when the source is stationary \cite{truong2013effects}. Another methodology worth noting is the mutual information metric between the present source state $X_t$ and the vector consisting of an ensemble of successfully received updates $\mathbf{W}_t$ \cite{8445873,wangFrameworkCharacterisingValue2021}. In the context of a Markovian source, this metric can be reduced to $I(X_t;X_{t-\mathrm{AoI}(t)})$, which demonstrates a non-linear dependency on AoI under both the Gaussian Markov source and the Binary Markov source \cite{8445873}. In \cite{wangFrameworkCharacterisingValue2021}, an analogous approach is utilized to characterize the \emph{value of information} (VoI) for the Ornstein-Uhlenbeck (OU) process, which likewise demonstrates a non-linear dependency on AoI. In \cite{9781393}, the conditional entropy $H(X_t|\mathbf{W}_t)$ is further investigated to measure the uncertainty of the source for a remote estimator given the history received updates $\mathbf{W}_t$. When applied to a Markov Source, this conditional entropy simplifies to $H(X_t|{X}_{t-\mathrm{AoI}(t)})$, thus exemplifying a non-linear penalty associated with AoI.

To address shortcoming (b), substantial research efforts have been invested in the optimization of the \emph{Mean Squared Error} (MSE), denoted by $(X_t-\hat{X}_t)^2$, with an ultimate objective of constructing a real-time reconstruction remote estimation system \cite{kam2018towards,sun2019wiener,ornee2019sampling,arafa2021sample}. In \cite{kam2018towards}, a metric termed \emph{effective age} is proposed to minimize the MSE for the remote estimation of a Markov source. In \cite{sun2019wiener} and \cite{ornee2019sampling}, two Markov sources of interest, the Wiener process and the OU process are investigated to deduce the MSE-optimal sampling policy. Intriguingly, these policies were found to be threshold-based, activating sampling only when the instantaneous MSE exceeds a predefined threshold, otherwise maintaining a state of idleness. The authors in \cite{arafa2021sample} explored the trade-off between MSE and quantization over a noisy channel, and derived the MSE-optimal sampling strategy for the OU process.

Complementary to the above MSE-centered research, variants of AoI that address shortcoming $(b)$ have also been conceptualized \cite{AoCI,zheng2020urgency,zhong2018two,AoII}. In \cite{AoCI}, \emph{Age of Changed Information} (AoCI) is proposed to address the ignorance of content dynamics of AoI. In this regard, unchanged statuses do not necessarily provide new information and thus are not prioritized for transmission. In \cite{zheng2020urgency}, the authors introduce the context-aware weighting coefficient to propose the \emph{Urgency of Information} (UoI), a metric capable of measuring the weighted MSE in diverse urgency contexts. In \cite{zhong2018two}, the authors propose a novel age penalty named \emph{Age of Synchronization} (AoS), a novel metric quantifying the time since the most recent end-to-end synchronization. Moreover, considering that an E2E status mismatch may exert a detrimental effect on the overall system's performance over time, the authors of \cite{AoII} propose a novel metric called \emph{Age of Incorrect Information} (AoII). This metric quantifies the adverse effect stemming from the duration of the E2E mismatch, revealing that both the degree and duration of E2E semantic mismatches lead to a \emph{utility} reduction for subsequent decision-making. 

The above studies focused on the open-loop generation-to-reception process within a transmitter-receiver pair, neglecting the closed-loop perception-actuation timeliness. A notable development addressing this issue is the extension from Up/Down-Link (UL/DL) AoI to a closed-loop AoI metric, referred to as the \emph{Age of Loop} (AoL) \cite{10012674}. Unlike the traditional open-loop AoI, which diminishes upon successful reception of a unidirectional update, the AoL decreases solely when both the UL status and the DL command are successfully received. Another advanced metric in \cite{nikkhah2023age}, called Age of Actuation (AoA), also encapsulates the actuation timeliness, which proves pertinent when the receiver employs the received update to execute timely actuation.

\begin{figure*}
	\centering
\includegraphics[angle=0,width=0.9\textwidth]{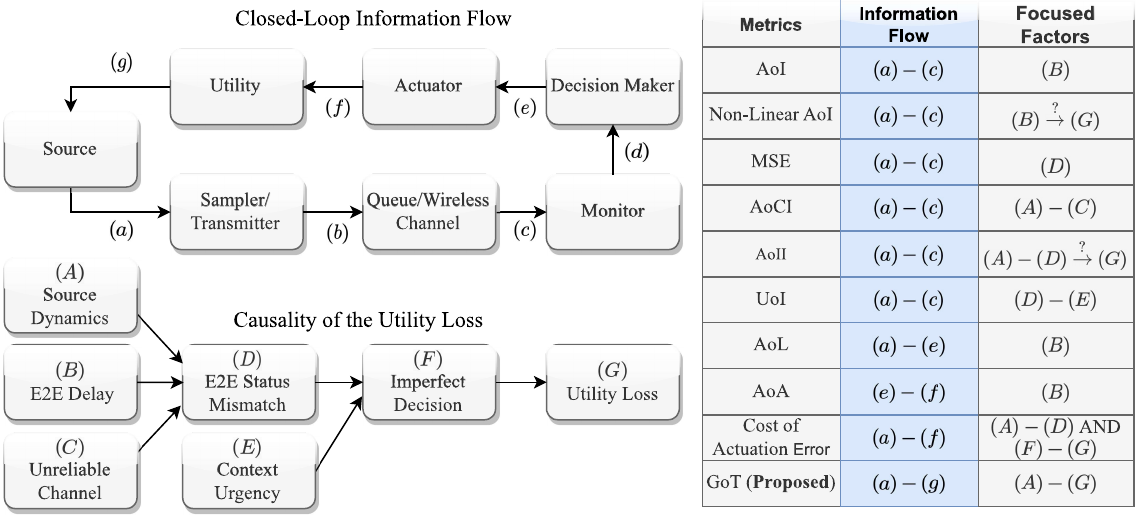}
	\caption{Interconnections of Age of Information and Beyond in the literature.}\label{Figure0}
\end{figure*}

Notwithstanding the above advancements, the question on \emph{how the E2E mismatch affects the closed-loop utility of decision-making has yet to be addressed}. To address this issue, \cite{Kountouris2020SemanticsEmpoweredCF,9551200,Salimnejad2023RealtimeRO,fountoulakisGoalorientedPoliciesCost2023} introduce a metric termed \emph{Cost of Actuation Error} to delve deeper into the cost resulting from the error actuation due to imprecise real-time estimations. Specifically, the \emph{Cost of Actuation Error} is denoted by an asymmetric zero diagonal matrix $\mathbf{C}$, with each value $C_{X_t,\hat{X}_t}$ representing the instant cost under the E2E mismatch status $(X_t,\hat{X}_t)_{X_t\ne\hat{X}_t}$. In this regard, the authors reveal that the \emph{utility} of decision-making bears a close relation to the E2E semantic mismatch category, as opposed to the mismatch duration (AoII) or mismatch degree (MSE). For example, an E2E semantic mismatch category that ``Fire'' is estimated as ``No Fire'' will result in high cost; while in the opposite scenario, the cost is low. Nonetheless, we notice that $i$) the method to obtain a \emph{Cost of Actuation Error} remains unclear, which implicitly necessitates a pre-established decision-making policy; $ii$) \emph{Cost of Actuation Error} does not consider the context-varying factors, which may also affect the decision-making \emph{utility}; $iii$) the zero diagonal property of the matrix implies the supposition that if $X_t=\hat{X}_t$, then $C_{X_t,\hat{X}_t}=0$, thereby signifying that error-less actuation necessitates no energy expenditure. Fig. \ref{Figure0} provides an overview of the existing metrics.

Against this background, the present authors have recently proposed a new metric referred to as GoT in \cite{li2023goaloriented}, which, compared to \emph{Cost of Actuation Error}, introduces new dimensions of the context $\Phi_{t}$ and the decision-making policy $\pi_A$ to describe the true \emph{utility} of decision-making. Remarkably, we find that GoT offers a versatile degeneration to established metrics such as AoI, MSE, UoI, AoII, and \emph{Cost of Actuation Error}. Additionally, it provides a balanced evaluation of the cost trade-off between the sampling and decision-making. The contributions of this work could be summarized as follows:

 $\bullet$ We focus on the decision \emph{utility} issue directly by employing the GoT. A controlled Markov source is observed, wherein the transition of the source is dependent on both the decision-making at the receiver and the contextual situation it is situated. In this case, the decision-making will lead to three aspects in \emph{utility}: $i$) the future evolution of the source; $ii$) the instant cost at the source; $iii$) the energy and resources consumed by actuation. 

 $\bullet$ We accomplish the goal-oriented \emph{sampler \& decision-maker} co-design, which, to the best of our knowledge, represents the first work that addresses the trade-off between semantics-aware sampling and goal-oriented decision-making. Specifically, we formulate this problem as a two-agent infinite-horizon Decentralized Partially Observable Markov Decision Process (Dec-POMDP) problem, with one agent embodying the semantics and context-aware sampler, and the other representing the goal-oriented decision-maker. Note that the optimal solution of even a finite-horizon Dec-POMDP is known to be NEXP-hard \cite{bernstein2002complexity}, we develop the RVI-Brute-Force-Search Algorithm. This algorithm seeks to derive optimal deterministic joint policies for both sampling and decision-making. A thorough discussion on the optimality of our algorithm is also presented.

 $\bullet$ To further mitigate the ``\emph{curse of dimensionality}'' intricately linked with the execution of the optimal RVI-Brute-Force-Search, we introduce a low-complexity yet efficient algorithm to solve the problem. The algorithm is designed by decoupling the problem into two distinct components: a Markov Decision Process (MDP) problem and a Partially Observable Markov Decision Process (POMDP) issue. Following this separation, the algorithm endeavors to search for the joint Nash Equilibrium between the sampler and the decision-maker, providing a sub-optimal solution to this goal-oriented communication-decision-making co-design.


\section{Goal-oriented Tensor}\label{section II}
\subsection{Specific Examples of Goal-Oriented Communications}
Consider a time-slotted communication system. Let $X_t\in\mathcal{S}$ represent the perceived semantic status at time slot $t$ at the source, and $\hat{X}_t\in\mathcal{S}$ denote the constructed estimated semantic status at time slot $t$. \emph{Real-time reconstruction-oriented} communications is a special type of goal-oriented communications, whose goal is achieving real-time accurate reconstruction:
\begin{equation}
\min\mathop {\lim \sup }\limits_{T \to \infty }\frac{1}{T}\sum_{t=0}^{T-1}(X_t-\hat{X}_t)^2.
\end{equation}

Although \emph{real-time reconstruction} is important, it does not represent the ultimate goals of communications, such as implementing accurate actuation (as opposed to merely \emph{real-time reconstruction}) and minimizing the system's long-term \emph{Cost of Actuation Error}, as in \cite{Kountouris2020SemanticsEmpoweredCF, 9551200, Salimnejad2023RealtimeRO, fountoulakisGoalorientedPoliciesCost2023}:

\begin{equation}
\min\mathop {\lim \sup }\limits_{T \to \infty }\frac{1}{T}\sum_{t=0}^{T-1}C_{X_t,\hat{X}_t},
\end{equation}
where $C_{X_t,\hat{X}_t}$ represents the instantaneous system cost of the system at time slot $t$. This cost is derived from the erroneous actuation stemming from the status mismatch between transceivers.

Following the avenue of \emph{Cost of Actuation Error}, we notice that the matrix-based metric could be further augmented to tensors to realize more flexible goal characterizations. For example, drawing parallels with the concept of \emph{Urgency of Information} \cite{zheng2020urgency}, we can introduce a context element $\Phi_t$ to incorporate context-aware attributes into this metric.\footnote{It is important to note that the GoT could be expanded into higher dimensions by integrating additional components, including actuation policies, task-specific attributes, and other factors.} Accordingly, we can define a three-dimensional GoT through a specified mapping:
$(X_t,\Phi_t,\hat{X}_t)\in \mathcal{S}\times \mathcal{V}\times\mathcal{S}\overset{\mathcal{L}}{\rightarrow} \mathrm{GoT}(t)\in\mathbb{R}$, which could be visualized by Fig. \ref{Figure1}. In this regard, the GoT, denoted by $\mathcal{L}(X_t,\Phi_t,\hat{X}_t)$ or $\mathrm{GoT}(t)$, indicates the instantaneous cost of the system at time slot $t$, with the knowledge of $(X_t,\Phi_t,\hat{X}_t)$. Consequently, the overarching goal of this system could be succinctly expressed as:
\begin{equation}
\min\mathop {\lim \sup }\limits_{T \to \infty }\frac{1}{T}\sum_{t=0}^{T-1}\mathrm{GoT}(t).
\end{equation}

\subsection{Degeneration to Existing Metrics}
In this subsection, we demonstrate, through visualized examples and mathematical formulations, that a three-dimension GoT can degenerate to existing metrics. Fig. \ref{Figure1} showcases a variety of instances of the GoT metric. 

\noindent
$\bullet$ \textbf{Degeneration to AoI}: AoI is generally defined as $\mathrm{AoI}(t) \triangleq t-\max\left\{G_i:D_i<t\right\}$, where $G_i$ is the generated time stamp of the $i$-th status update, $D_i$ represents the corresponding deliver time slot. Since AoI is known to be semantics-agnostic \cite{uysal2022semantic,Kountouris2020SemanticsEmpoweredCF}, the values in the tensor only depend on the freshness context $\Phi_t\triangleq\mathrm{AoI}(t)$. In this case, each tensor value given a determined context status $\Phi(t)$ is a constant, and the GoT is reduced to 
\begin{equation}
\mathrm{GoT}(t)=\mathcal{L}(X_t,\Phi_t,\hat{X}_t)\overset{(a)}{=}\Phi(t)=\mathrm{AoI}(t).
\end{equation}
where (a) indicates that AoI is semantics-agnostic. In this case, AoI refers to the context exactly. The process of reducing GoT to AoI is visually depicted in Fig. \ref{Figure1}(a).

\begin{figure*}
	\centering
	\includegraphics[angle=0,width=0.9\textwidth]{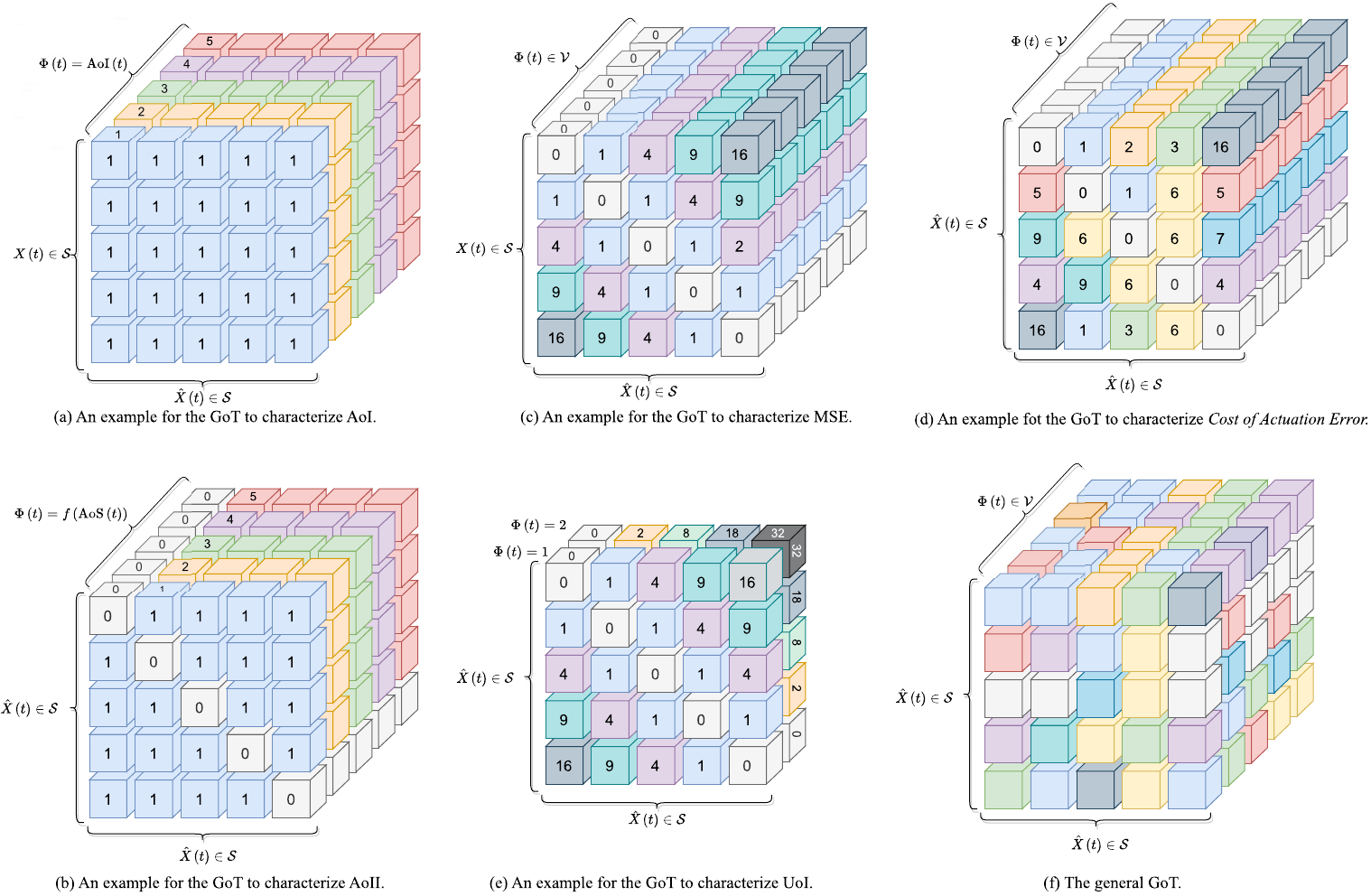}
	\caption{Visualize the GoT.}\label{Figure1}
\end{figure*}

\noindent
$\bullet$ \textbf{Degeneration to AoII}: AoII is defined as $\mathrm{AoII}(t) \triangleq f(\mathrm{AoS}(t))\cdot g(X_t,\hat{X}_t)$,
where $\textrm{AoS}(t) \triangleq t- \max\left\{\tau:t\le \tau, X_t=\hat{X}_t\right\}$. AoII embraces the semantics-aware characteristics and is hence regarded as an enabler of semantic communications \cite{9137714}. The inherent multiplicative characteristic of AoII guarantees the existence of a base layer of the tensor representation, from which other layers are derived by multiplying this base layer, as depicted in Fig. \ref{Figure1}(b). Let $\Phi_t\triangleq f(\mathrm{AoS}(t))$, the GoT is reduced to 
\begin{equation}
\mathrm{GoT}(t)=\mathcal{L}\left(X_t,\Phi_t,\hat{X}_t\right)\overset{(a)}{=}\Phi_t\cdot g\left(X_t,\hat{X}_t\right)=f(\mathrm{AoS}(t))\cdot g(X_t,\hat{X}_t)= \mathrm{AoII}(t),
\end{equation}
where $g(X_t,\hat{X}_t)$, typically characterized by $\mathbbm{1}_{\left\{X_t\ne\hat{X}_t\right\}}$, represents the base layer in the tensor, and $(a)$ indicates the inherent multiplicative characteristic of AoII. The visual representation of the GoT reduction to AoII is depicted in Fig. \ref{Figure1}(b).

\noindent
$\bullet$ \textbf{Degeneration to MSE}: MSE is defined as $\mathrm{MSE}(t) \triangleq  \left(X_t-\hat{X}_t\right)^2$. 
MSE is intuitively context-agnostic. In the scenario where $g(X_t,\hat{X}_t)=(X_t-\hat{X}_t)^2$, the GoT collapses to the MSE metric:
\begin{equation}
\mathrm{GoT}(t)=\mathcal{L}\left(X_t,\Phi_t,\hat{X}_t\right)\overset{(a)}{=}g\left(X_t,\hat{X}_t\right)
=\left(X_t-\hat{X}_t\right)^2=\mathrm{MSE}(t),
\end{equation}
where $(a)$ establishes due to the context-agnostic nature of MSE. The visualization of the GoT reduction to MSE is shown in Fig. \ref{Figure1}(c).

\noindent
$\bullet$ \textbf{Degeneration to UoI}: UoI is defined by $\mathrm{UoI}(t) \triangleq \Phi_t\cdot \left(X_t-\hat{X}_t\right)^2$, where the context-aware weighting coefficient $\Phi_t$ is further introduced \cite{zheng2020urgency}. When $g(X_t,\hat{X}_t)=(X_t-\hat{X}_t)^2$, the GoT could be transformed into the UoI by
\begin{equation}
\mathrm{GoT}(t)=\mathcal{L}\left(X_t,\Phi_t,\hat{X}_t\right)\overset{(a)}{=}\Phi_t\cdot g\left(X_t,\hat{X}_t\right)
=\Phi_t\cdot \left(X_t-\hat{X}_t\right)^2=\mathrm{UoI}(t),
\end{equation}
where $(a)$ indicates the inherent multiplicative characteristic of UoI. The visualization of the GoT reduction to UoI is shown in Fig. \ref{Figure1}(d).

\noindent
$\bullet$ \textbf{Degeneration to \emph{Cost of Actuation Error}}: \emph{Cost of Actuation Error} is defined by $C_{X_t,\hat{X}_t}$, which indicates the instantaneous system cost if the source status is $X_t$ and the estimated one $\hat{X}_t$ mismatch \cite{Salimnejad2023RealtimeRO}. Let $g(X_t,\hat{X}_t)=C_{X_t,\hat{X}_t}$, the GoT collapses to \emph{Cost of Actuation Error}:
\begin{equation}
\mathrm{GoT}(t)=\mathcal{L}\left(X_t,\Phi_t,\hat{X}_t\right)\overset{(a)}{=} g\left(X_t,\hat{X}_t\right)=C_{X_t,\hat{X}_t},
\end{equation}
where $(a)$ establishes due to the context-agnostic nature of \emph{Cost of Actuation Error}. The visualization of the GoT reduction to \emph{Cost of Actuation Error} is shown in Fig. \ref{Figure1}(e).
\subsection{Goal Characterization Through GoT}

As shown in Fig. \ref{Figure1}(f), a more general GoT exhibits neither symmetry, zero diagonal property, nor a base layer, offering considerable versatility contingent upon the specific goal. Here we propose a method that constructs a specific GoT, there are four steps: 

\noindent
$\bullet$ \textbf{Step 1:} Clarify the scenario and the communication goal. For instance, in the wireless accident monitoring and rescue systems, the goal is to minimize the cumulative average cost associated with accidents and their corresponding rescue interventions over the long term.

\noindent
$\bullet$ \textbf{Step 2:} Define the sets of semantic status, $\mathcal{S}$, and contextual status, $\mathcal{V}$. These sets can be modeled as collections of discrete components. For instance, the set $\mathcal{S}$ might encompass the gravity of industrial mishaps in intelligent factories, whereas the set $\mathcal{V}$ could encompass the meteorological circumstances, which potentially influence the source dynamics and the costs.

\noindent
$\bullet$ \textbf{Step 3:} The GoT could be decoupled by three factors: \footnote{The definitions of costs are diverse, covering aspects such as financial loss, resource usage, or a normalized metric derived through scaling, depending on the specific objective they are designed to address.}

 \noindent 
 $i)$ The status inherent cost $C_1(X_t,\Phi_t)$. It quantifies the cost associated with different status pairs $(X_t, \Phi_t)$ in the absence of external influences;

\noindent 
 $ii)$ The actuation gain $C_2(\pi_A(\hat{X}_t))$, where $\pi_A$ is a deterministic decision policy contingent upon $\hat{X}_t$. This cost quantifies the positive \emph{utility} resulting from the actuation $\pi_A(\hat{X}(t))$;

\noindent 
 $iii)$ The actuation resource expenditure $C_3(\pi_A(\hat{X}_t))$, which reflects the resources consumed by a particular actuation $\pi_A(\hat{X}(t))$.

\noindent
$\bullet$ \textbf{Step 4:} Constructing the GoT. The GoT, given a specific triple-tuple $(X_t, \hat{X}_t,\Phi_t)$ and a certain decision strategy $\pi_A$, is calculated by 
\begin{equation}\label{got}
\mathrm{GoT}^{\pi_A}(t)=\mathcal{L}(X_t,\Phi_t,\hat{X}_t,\pi_A)=\left[C_1(X_t,\Phi_t) - a C_2(\pi_A(\hat{X}_t))\right]^+
+ bC_3(\pi_A(\hat{X}_t)). 
\end{equation}
The ramp function $\left[\cdot\right]^+$ ensures that any actuation $\pi_A(\hat{X}_t)$ reduces the cost to a maximum of 0. A visualization of a specific GoT construction is shown in Fig. \ref{gotconstructing}. The GoT in Fig. \ref{gotconstructing} is obtained through the following definition:
\begin{equation}\label{gotexample}
\begin{array}{c}
C_1(X_t,\Phi_t)= \left(\begin{array}{c|ccc}
~ & 0 & 1 & 2 \\
\hline
0 & 0 & 1 & 3 \\
1 & 0 & 2 & 5
\end{array}\right),
\pi_A(\hat{X}_t)=\left(\begin{array}{ccc}
0 & 1 & 2\\
\hline
a_0 & a_1 & a_2\\
\end{array}\right), \\
C_2(a_A(t))=\left(\begin{array}{ccc}
a_0 & a_1 & a_2\\
\hline
0 & 2 & 4\\
\end{array}\right), 
C_3(a_A(t)) =\left(\begin{array}{ccc}
a_0 & a_1 & a_2\\
\hline
0 & 1 & 2\\
\end{array}\right).
\end{array}
\end{equation}

\section{System Model}\label{section III}
\begin{figure}
	\centering
\begin{minipage}[t]{0.48\textwidth}
		\centering
	\includegraphics[angle=0,width=1\textwidth]{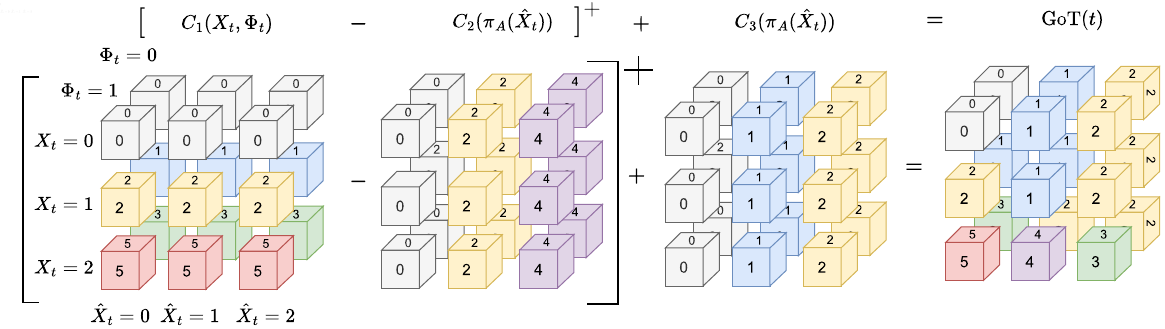}
	\caption{A visualized example for characterizing the GoT through (\ref{got}), where $[\cdot]^+$ represents the ramp function,}\label{gotconstructing}
\end{minipage}
\begin{minipage}[t]{0.48\textwidth}
		\centering	\includegraphics[angle=0,width=1\textwidth]{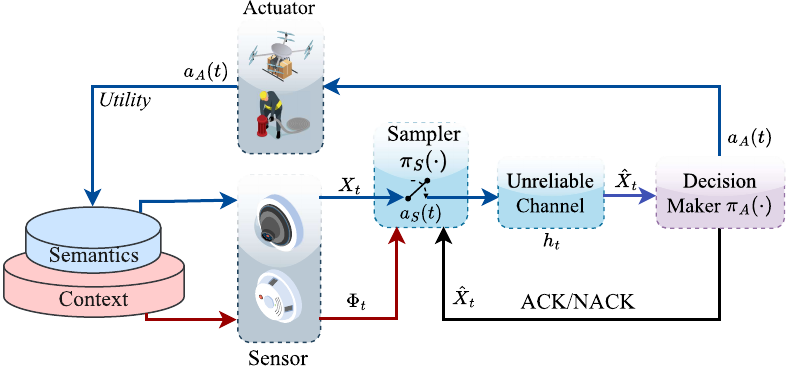}
	\caption{Illustration of our considered system where transmitted semantic status arrives at a receiver for decision-making to achieve a certain goal.}\label{systemmodel}
\end{minipage}
\end{figure}
We consider a time-slotted perception-actuation loop where both the perceived semantics $X_t\in\mathcal{S}=\left\{s_1,\cdots,s_{|\mathcal{S}|}\right\}$ and context $\Phi_t\in\mathcal{V}=\left\{v_1,\cdots,v_{|\mathcal{V}|}\right\}$ are input into a semantic sampler, tasked with determining the significance of the present status $X_t$ and subsequently deciding if it warrants transmission via an unreliable channel. The semantics and context are extracted and assumed to perfectly describe the status of the observed process. The binary indicator, $a_S(t)=\pi_S(X_t,\Phi_{t},\hat{X}_t) \in \left\{0, 1\right\}$, signifies the sampling and transmission action at time slot $t$, with the value $1$ representing the execution of sampling and transmission, and the value $0$ indicating the idleness of the sampler. $\pi_S$ here represents the sampling policy. We consider a perfect and delay-free feedback channel \cite{Kountouris2020SemanticsEmpoweredCF,9551200,Salimnejad2023RealtimeRO,fountoulakisGoalorientedPoliciesCost2023}, with ACK representing a successful transmission and NACK representing the otherwise. The decision-maker at the receiver will make decisions $a_A(t)\in\mathcal{A}_A=\left\{a_1,\cdots,a_{|\mathcal{A}_A|}\right\}$ based on the estimate $\hat{X}_t$\footnote{We consider a general abstract decision-making set $\mathcal{A}_A$ that exhibits adaptability to diverse applications. Notably, this decision-making set can be customized and tailored to suit specific requirements.}, which will ultimately affect the \emph{utility} of the system. An illustration of our considered model is shown in Fig. \ref{systemmodel}. 

\subsection{Semantics and Context Dynamics}

Thus far, a plethora of studies have delved into the analysis of various discrete Markov sources, encompassing Birth-Death Markov Processes elucidated in \cite{fountoulakisGoalorientedPoliciesCost2023}, binary Markov sources elucidated in \cite{kam2020age}, and etc. In real-world situations, the context and the actuation also affect the source's evolution. Consequently, we consider a context-dependent controlled Discrete Markov source:
\begin{equation}\label{Source}
\Pr \left( {X_{t+1}=s_u\left| {X_t=s_i, a_A(t)=a_m, \Phi_t=v_k} \right.} \right)=p_{i,u}^{(k,m)},
\end{equation}
where the dynamics of the source is dependent on both the decision-making $a_A(t)$ and context $\Phi_{t}$. Furthermore, we take into account the variations in context $\Phi_t$, characterized by:
\begin{equation}\label{context}
\Pr \left( {\Phi_{t+1}=v_r\left| {\Phi_{t}=v_k} \right.} \right)=p_{k,r}.
\end{equation}

 Note that the semantic status $X_t$ and context status $\Phi_t$ could be tailored according to the specific application scenario. In general, these two processes are independent with each other.
\subsection{Unreliable Channel and Estimate Transition}
\begin{figure}[t]
	\centering
	\includegraphics[angle=0,width=0.6\textwidth]{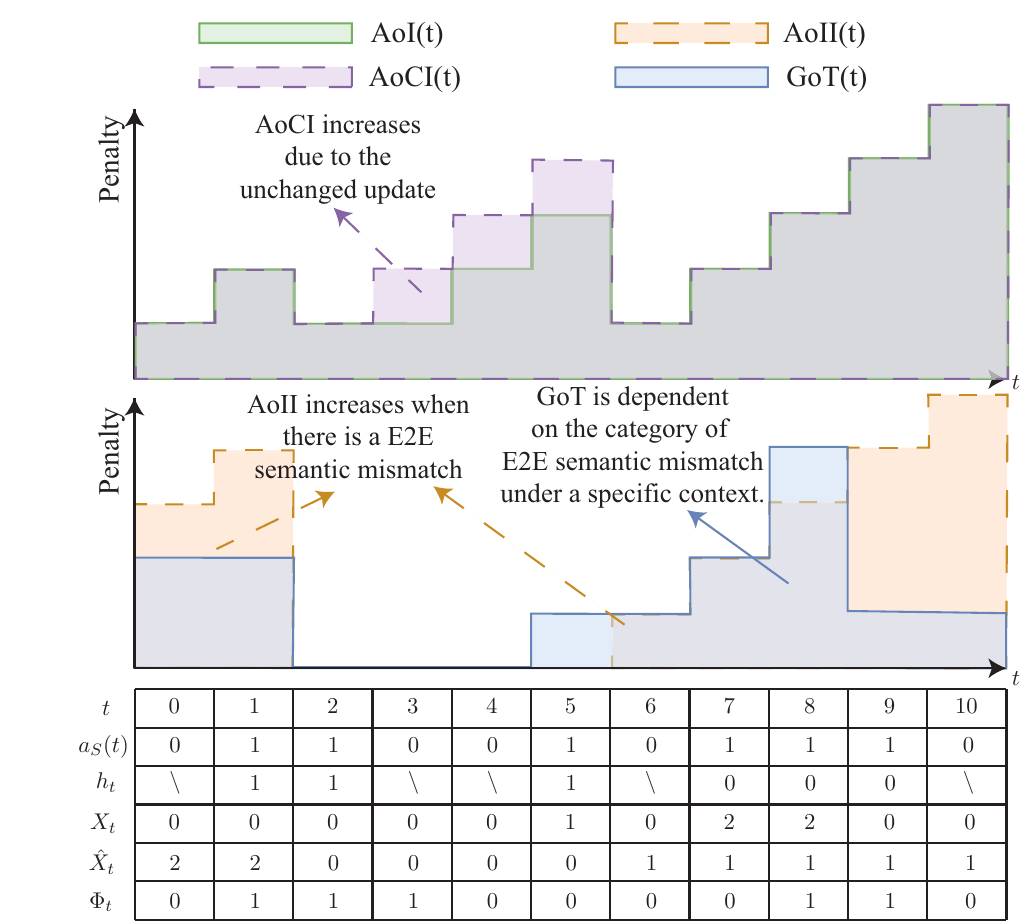}
	\caption{An illustration of AoI, AoCI, AoII, and GoT in a time-slotted status update system. Here, the value of GoT is obtained from the tensor obtained on the right-hand side of Fig. \ref{gotconstructing}.}\label{evolution}
\end{figure}
We assume that the channel realizations exhibit independence and identical distribution (i.i.d.) across time slots, following a Bernoulli distribution. Particularly, the channel realization $h_t$ assumes a value of $1$ in the event of successful transmission, and $0$ otherwise. Accordingly, we define the probability of successful transmission as $\Pr\left(h_t=1\right)=p_S$ and the failure probability as $\Pr\left(h_t=0\right)=1-p_S$. To characterize the dynamic process of $\hat{X}_t$, we consider two cases as described below:

\noindent $\bullet$ $a_S(t)=0$. In this case, the sampler and transmitter remain idle, manifesting that there is no new knowledge given to the receiver, \emph{i.e.}, $\hat{X}_{t+1}=\hat{X}_{t}$. As such, we have:
\begin{equation}\label{as0}
\Pr\left({\hat{X}_{t+1}=x\left|\hat{X}_t=s_j,a_S(t)=0\right.}\right)=\mathbbm{1}_{\left\{x=s_j\right\}}.
\end{equation}

\noindent $\bullet$ $a_S(t)=1$. In this case, the sampler and transmitter transmit the current semantic status $X_t$ through an unreliable channel. As the channel is unreliable, we differentiate between two distinct situations: $h_t=1$ and $h_t=0$:

\noindent (a) $h_t=1$. In this case, the transmission is successful. As such, the estimate at the receiver $\hat{X}_{t+1}$ is nothing but $X(t)$, and the transition probability is
\begin{equation}
\Pr\left({\hat{X}_{t+1}=x\left|\hat{X}_t=s_j,X_t=s_i,a_S(t)=1,h_t=1\right.}\right)=\mathbbm{1}_{\left\{x=s_i\right\}}.
\end{equation}

\noindent (b) $h_t=0$. In this case, the transmission is not successfully decoded by the receiver. As such, the estimate at the receiver $\hat{X}_{t+1}$ remains $\hat{X}(t)$. In this way, the transition probability is
\begin{equation}
\Pr\left({\hat{X}_{t+1}=x\left|\hat{X}_t=s_j,X_t=s_i,a_S(t)=1,h_t=0\right.}\right)=\mathbbm{1}_{\left\{x=s_j\right\}}.
\end{equation}
As the channel realization $h_t$ is independent with the process of $X_t$, $\hat{X}_t$, and $a_S(t)$, we have that
\begin{equation}\label{as1}
\begin{aligned}
&\Pr\left({\hat{X}_{t+1}=x\left|\hat{X}_t=s_j,X_t=s_i,a_S(t)=1\right.}\right)\\
&=\sum_{h_t}p\left(h_t\right)\Pr\left({\hat{X}_{t+1}=x\left|\hat{X}_t=s_j,X_t=s_i,a_S(t)=1,h_t\right.}\right)\\&=p_S\cdot\mathbbm{1}_{\left\{x=s_i\right\}}+(1-p_S)\cdot\mathbbm{1}_{\left\{x=s_j\right\}}.
\end{aligned}
\end{equation}
Combing (\ref{as0}) with (\ref{as1}) yields the dynamics of the estimate.
\subsection{Goal-oriented decision-making and Actuating}
We note that the previous works primarily focused on minimizing the open-loop freshness-related or error-related penalty for a transmitter-receiver system. Nevertheless, irrespective of the \emph{fresh} delivery or accurate end-to-end timely reconstruction, the ultimate goal of such optimization efforts is to ensure precise and effective decision-making. To this end, we broaden the open-loop transmitter-receiver information flow to include a perception-actuation closed-loop \emph{utility} flow by incorporating the decision-making and actuation processes. As a result, decision-making and actuation enable the conversion of status updates into ultimate effectiveness. Here the decision-making at time slot $t$ follows that $a_A(t)=\pi_A(\hat{X}_t)$, with $\pi_A$ representing the deterministic decision-making policy.

\section{Problem Formulation and Solution}\label{sectionIV}
Traditionally, the development of sampling strategies has been designed separately from the decision-making process. An archetypal illustration of this two-stage methodology involves first determining the optimal sampling policy based on AoI, MSE, and their derivatives, such as AoII, and subsequently accomplishing goal-oriented decision-making. This two-stage separate design arises from the inherent limitation of existing metrics that they fail to capture the closed-loop decision \emph{utility}. Nevertheless, the metric GoT empowers us to undertake a co-design of sampling and decision-making.

We adopt the \textit{team decision theory}, wherein two agents, one embodying the sampler and the other the decision-maker, collaborate to achieve a shared goal. We aim at determining a joint deterministic policy $\boldsymbol{\pi}_C=(\pi_S,\pi_A)$ that minimizes the long-term average cost of the system. It is considered that the sampling and transmission of an update also consumes energy, incurring a $C_S$ cost. In this case, the instant cost of the system could be clarified by $\mathrm{GoT}^{\pi_A}(t)+C_S\cdot a_S(t)$, and the problem is characterized as: 
\begin{equation}\label{p1}
\begin{array}{*{20}{c}}
{{{\cal P}} 1:}&{\mathop {\min }\limits_{{\boldsymbol{\pi}_C}  \in \Upsilon } \mathop {\lim \sup }\limits_{T \to \infty } \frac{1}{T}{\mathbb{E}^{{\boldsymbol{\pi}}_C} }\left( {\sum\limits_{t = 0}^{T - 1} {\mathrm{GoT}^{\pi_A}(t)+C_S\cdot a_S(t)} } \right)}\\
\end{array},
\end{equation}
where $\boldsymbol{\pi}_C=(\pi_S,\pi_A)$ denotes the joint sampling and decision-making policy, comprising $\pi_S=(a_S(0),a_S(1),\cdots)$ and $\pi_A=(a_A(0),a_A(1),\cdots)$, which correspond to the sampling action sequence and actuation sequence, respectively. Note that $\mathrm{GoT}^{\pi_A}(t)$ is characterized by (\ref{got}).
\subsection{Dec-POMDP Formulation}
The problem in (\ref{p1}) aims to find the optimal decentralized policy $\boldsymbol{\pi}_C$ so that the long-term average cost of the system is minimized. To solve the problem $\mathcal{P}1$, we ought to formulate a DEC-POMDP problem, which is initially introduced in \cite{bernstein2002complexity} to solve the cooperative sequential decision issues for distributed multi-agents. Within a Dec-POMDP framework, a team of agents cooperates to achieve a shared goal, relying solely on their localized knowledge. A typical Dec-POMDP is denoted by a tuple $\mathscr{M}_{DEC-POMDP}\triangleq\left\langle n, \mathcal{I}, \mathcal{A}, \mathcal{T}, \Omega, \mathcal{O}, \mathcal{R}  \right\rangle$:

\noindent $\bullet$ $n$ denotes the number of agents. In this instance, we have $n=2$, signifying the presence of two agents within this context: one agent $\mathcal{A}gent_S$ embodies the semantics-context-aware sampler and transmitter, while the other represents the $\hat{X}_t$-dependent decision-maker, denoted by $\mathcal{A}gent_A$.

\noindent $\bullet$ $\mathcal{I}$ is the finite set of the global system status, characterized by $(X_t, \hat{X}_t,\Phi_t)\in \mathcal{S}\times\mathcal{S}\times\mathcal{V}$. For the sake of brevity, we henceforth denote $\mathbf{W}_t=(X_t, \hat{X}_t,\Phi_t)$ in the squeal.

\noindent $\bullet$ $\mathcal{T}$ is the transition function defined by
\begin{equation}
\mathcal{T}\left(\mathbf{w},\mathbf{a},\mathbf{w}'\right)\triangleq\Pr(\mathbf{W}_{t+1}=\mathbf{w}'|\mathbf{W}_t=\mathbf{w},\mathbf{a}_t=\mathbf{a}),
\end{equation}
which is defined by the transition probability from global status $\mathbf{W}_t=\mathbf{w}$ to status $\mathbf{W}_{t+1}=\mathbf{w}'$, after the agents in the system taking a joint action $\mathbf{a}_t=\mathbf{a}=(a_S(t),a_A(t))$. {For the sake of concise notation, we let $p(\mathbf{w}'|\mathbf{w},\mathbf{a})$ symbolize $\mathcal{T}\left(\mathbf{w},\mathbf{a},\mathbf{w}'\right)$ in the subsequent discourse.} Then, by taking into account the \textit{conditional independence} among $X_{t+1}$, $\Phi_{t+1}$, and $\hat{X}_{t+1}$, given $(X_{t},\Phi_{t},\hat{X}_{t})$ and $\mathbf{a}(t)$, the transition functions can be calculated in lemma \ref{as11}.
\begin{lemma}\label{as11} The transition functions of the Dec-POMDP:
	\begin{equation}
	p\left((s_u,x,v_r)\left|(s_i,s_j,v_k),(1,a_m)\right.\right)=
	p_{i,u}^{(k,m)}\cdot p_{k,r} \cdot \left(p_S\cdot\mathbbm{1}_{\left\{x=s_i\right\}}+(1-p_S)\cdot\mathbbm{1}_{\left\{x=s_j\right\}}\right),
	\end{equation}
	\begin{equation}\label{as00}
	p\left((s_u,x,v_r)\left|(s_i,s_j,v_k),(0,a_m)\right.\right)=
	p_{i,u}^{(k,m)}\cdot p_{k,r} \cdot \mathbbm{1}_{\left\{x=s_j\right\}},
	\end{equation}
	for any $x\in\mathcal{S}$ and indexes $i$, $j$, $u\in\left\{1,2,\cdots,|\mathcal{S}|\right\}$, $k$, $r\in\left\{1,2,\cdots,|\mathcal{V}|\right\}$, and $m\in\left\{1,2,\cdots,|\mathcal{A}_A|\right\}$.
\end{lemma}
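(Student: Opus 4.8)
The plan is to derive both identities directly from the definition $p(\mathbf{w}'|\mathbf{w},\mathbf{a})=\Pr(\mathbf{W}_{t+1}=\mathbf{w}'|\mathbf{W}_t=\mathbf{w},\mathbf{a}_t=\mathbf{a})$ together with the product structure of the dynamics laid out in Section \ref{section III}. First I would fix a target state $\mathbf{w}'=(s_u,x,v_r)$, a current state $\mathbf{w}=(s_i,s_j,v_k)$, and a joint action $\mathbf{a}=(a_S,a_m)$, and write
\begin{equation}
p(\mathbf{w}'|\mathbf{w},\mathbf{a})=\Pr\left(X_{t+1}=s_u,\hat{X}_{t+1}=x,\Phi_{t+1}=v_r\,\big|\,X_t=s_i,\hat{X}_t=s_j,\Phi_t=v_k,a_S(t)=a_S,a_A(t)=a_m\right).
\end{equation}
Invoking the conditional independence of $X_{t+1}$, $\Phi_{t+1}$, and $\hat{X}_{t+1}$ given $(X_t,\hat{X}_t,\Phi_t)$ and $\mathbf{a}_t$, this joint probability factors into the product of the three corresponding marginal transition probabilities.

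Second, I would collapse each factor by discarding the conditioning variables it does not depend on: the $X_{t+1}$-factor becomes $p_{i,u}^{(k,m)}$ by (\ref{Source}); the $\Phi_{t+1}$-factor becomes $p_{k,r}$ by (\ref{context}), since $\Phi_t$ evolves autonomously; and the $\hat{X}_{t+1}$-factor reduces to $\Pr(\hat{X}_{t+1}=x|\hat{X}_t=s_j,X_t=s_i,a_S(t)=a_S)$. For the last factor I would then split on the sampling action: when $a_S=0$, (\ref{as0}) gives $\mathbbm{1}_{\left\{x=s_j\right\}}$ immediately; when $a_S=1$, I would apply the law of total probability over the channel realization $h_t\in\{0,1\}$ --- which is i.i.d. Bernoulli with parameter $p_S$ and independent of $(X_t,\hat{X}_t,\Phi_t)$ and $\mathbf{a}_t$ --- using the two success/failure sub-cases, which yields $p_S\mathbbm{1}_{\left\{x=s_i\right\}}+(1-p_S)\mathbbm{1}_{\left\{x=s_j\right\}}$ exactly as in (\ref{as1}). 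Multiplying the three simplified factors reproduces precisely the two displayed formulas of the lemma.

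The only step that merits careful justification --- and the one I would spell out most explicitly --- is the conditional-independence factorization in the first step, which rests on the modelling assumptions of Section \ref{section III}: the exogenous noise driving $X_{t+1}$, the exogenous noise driving $\Phi_{t+1}$, and the channel realization $h_t$ driving $\hat{X}_{t+1}$ must be mutually independent conditioned on the current global state and joint action (the stated independence of $\{h_t\}$ from the source and estimate processes, and the stated independence of the semantics and context processes). Once this is made explicit, the rest is routine bookkeeping with indicator functions, and I do not anticipate any genuine obstacle beyond verifying that each conditioning set can indeed be pruned as claimed.
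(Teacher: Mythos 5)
Your proposal is correct and follows essentially the same route as the paper's own proof in Appendix A: factorize the joint transition probability via the conditional independence of $X_{t+1}$, $\hat{X}_{t+1}$, and $\Phi_{t+1}$ given the current global state and joint action, then reduce each factor to (\ref{Source}), (\ref{as0})/(\ref{as1}), and (\ref{context}) respectively. Your additional averaging over $h_t$ for the $a_S=1$ case is exactly how the paper obtains (\ref{as1}) in the main text, so there is no substantive difference.
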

\begin{proof}
	The transition function can be derived by incorporating the dynamics in equations (\ref{Source}), (\ref{context}), (\ref{as0}), and (\ref{as1}). A more comprehensive proof is presented in Appendix \ref{prooflemma1}.
\end{proof}

\noindent $\bullet$ $\mathcal{A}=\mathcal{A}_S\times\mathcal{A}_A$, with $\mathcal{A}_S\triangleq\left\{0,1\right\}$ representing the set of binary sampling actions executed by the sampler, and $\mathcal{A}_A\triangleq\left\{a_0,\cdots,a_{M-1}\right\}$ representing the set of decision actions undertaken by the actuator.

\noindent $\bullet$ $\Omega=\Omega_S\times \Omega_A$ constitutes a finite set of joint observations. Generally, the observation made by a single agent regarding the system status is partially observable. $\Omega_S$ signifies the sampler's observation domain. In this instance, the sampler $\mathcal{A}gent_S$ is entirely observable, with $\Omega_S$ encompassing the comprehensive system state ${o}_S^{(t)}=\mathbf{W}_t$. $\Omega_A$ signifies the actuator's observation domain. In this case, the actuator (or decision-maker) $\mathcal{A}gent_A$ is partially observable, with $\Omega_A$ comprising $o_A^{(t)}=\hat{X}(t)$. The joint observation at time instant $t$ is denoted by $\mathbf{o}_t=(o_S^{(t)},o_A^{(t)})$.

\noindent $\bullet$ $\mathcal{O}=\mathcal{O}_S\times\mathcal{O}_A$ represents the observation function, where $\mathcal{O}_S$ and $\mathcal{O}_A$ denotes the observation function of the sampler $\mathcal{A}gent_S$ and the actuator $\mathcal{A}gent_A$, respectively, defined as:
\begin{equation}
\begin{aligned}
\mathcal{O}(\mathbf{w}, \mathbf{o})\triangleq\Pr(\mathbf{o}_t=\mathbf{o}|\mathbf{W}_t=\mathbf{w}),\\
\mathcal{O}_S(\mathbf{w}, o_S)\triangleq\Pr(o_S^{(t)}=o_S|\mathbf{W}_t=\mathbf{w}),\\
\mathcal{O}_A(\mathbf{w}, o_A)\triangleq\Pr(o_A^{(t)}=o_A|\mathbf{W}_t=\mathbf{w}).
\end{aligned}
\end{equation}	
The observation function of an agent $\mathcal{A}gent_i$ signifies the conditional probability of agent $\mathcal{A}gent_i$ perceiving $o_i$, contingent upon the prevailing global system state as $\mathbf{W}_t=\mathbf{w}$. For the sake of brevity, we henceforth let $p_A(o_A|\mathbf{w})$ represent $\mathcal{O}_A(\mathbf{w}, o_A)$ and $p_S(o_S|\mathbf{w})$ represent $\mathcal{O}_S(\mathbf{w}, o_A)$ in the subsequent discourse. In our considered model, the observation functions are deterministic, characterized by lemma \ref{obf}.
\begin{lemma} The observation functions of the Dec-POMDP: \label{obf}
	\begin{equation} 
	\begin{aligned}
	p_S\left((s_u,s_r,v_q)|(s_i,s_j,v_k)\right)&=\mathbbm{1}_{\left\{(s_u,s_r,v_q)=(s_i,s_j,v_k)\right\}}\\
	p_A\left(s_z|(s_i,s_j,v_k)\right)&=\mathbbm{1}_{\left\{s_z=s_j\right\}}\cdot
	\end{aligned}
	\end{equation}
	for indexes $z$, $i$, $j$, $u$, $r\in\left\{1,2,\cdots\,|\mathcal{S}|\right\}$, and $k$, $q\in\left\{1,2,\cdots\,|\mathcal{V}|\right\}$.
\end{lemma}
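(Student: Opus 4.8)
The plan is to read both identities straight off the Dec-POMDP specification, using the single structural fact that in our model each agent's observation is a deterministic coordinate projection of the global state $\mathbf{W}_t=(X_t,\hat{X}_t,\Phi_t)$; consequently the two observation channels are noiseless, and the conditional probabilities $\mathcal{O}_S(\mathbf{w},o_S)$ and $\mathcal{O}_A(\mathbf{w},o_A)$ collapse to indicator functions. There is no stochasticity to integrate out, so the argument is essentially a verification that the projections line up with the ordering of the state tuple.

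First, for the sampler $\mathcal{A}gent_S$: recall from the formulation of $\Omega_S$ that $\mathcal{A}gent_S$ is fully observable, with $o_S^{(t)}=\mathbf{W}_t$. Hence, conditioned on $\{\mathbf{W}_t=(s_i,s_j,v_k)\}$, the observation $o_S^{(t)}$ equals $(s_i,s_j,v_k)$ almost surely, so $\Pr\!\big(o_S^{(t)}=(s_u,s_r,v_q)\,\big|\,\mathbf{W}_t=(s_i,s_j,v_k)\big)$ is $1$ when $(s_u,s_r,v_q)=(s_i,s_j,v_k)$ and $0$ otherwise. Substituting into the definition $\mathcal{O}_S(\mathbf{w},o_S)\triangleq\Pr(o_S^{(t)}=o_S\mid\mathbf{W}_t=\mathbf{w})$ and writing it as $p_S(\cdot\mid\cdot)$ yields the first claimed expression $\mathbbm{1}_{\{(s_u,s_r,v_q)=(s_i,s_j,v_k)\}}$ for all admissible indices.

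Second, for the decision-maker $\mathcal{A}gent_A$: recall that $\Omega_A$ was defined so that $\mathcal{A}gent_A$ perceives only the estimate, $o_A^{(t)}=\hat{X}_t$, which is precisely the second coordinate of the global state. Thus $\{\mathbf{W}_t=(s_i,s_j,v_k)\}$ forces $\hat{X}_t=s_j$ deterministically, so $\Pr\!\big(o_A^{(t)}=s_z\,\big|\,\mathbf{W}_t=(s_i,s_j,v_k)\big)$ equals $1$ iff $s_z=s_j$ and $0$ otherwise, i.e. $p_A\big(s_z\mid(s_i,s_j,v_k)\big)=\mathbbm{1}_{\{s_z=s_j\}}$, which is the second claimed expression. (If desired, one also notes $\mathcal{O}(\mathbf{w},\mathbf{o})=\mathcal{O}_S(\mathbf{w},o_S)\,\mathcal{O}_A(\mathbf{w},o_A)$ by the independence of the two noiseless observation processes, recovering the joint observation function.)

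The only real care needed is bookkeeping: keeping the coordinate projections consistent with the fixed ordering $(X_t,\hat{X}_t,\Phi_t)$ of $\mathbf{W}_t$ and respecting the stated index ranges $z,i,j,u,r\in\{1,\dots,|\mathcal{S}|\}$ and $k,q\in\{1,\dots,|\mathcal{V}|\}$; there is no probabilistic obstacle, since both observation maps are deterministic by construction. A slightly expanded version of this verification could be placed in an appendix, mirroring the treatment given for Lemma \ref{as11}.
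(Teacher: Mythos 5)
Your verification is correct and matches the paper's (implicit) justification: the paper states this lemma without a separate proof, treating it as immediate from the definitions $o_S^{(t)}=\mathbf{W}_t$ and $o_A^{(t)}=\hat{X}_t$, which is exactly the deterministic-projection argument you give. Nothing is missing.
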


\noindent $\bullet$ $\mathcal{R}$ is the reward function, characterized as a mapping $\mathcal{I}\times\mathcal{A}\rightarrow\mathbb{R}$. In the long-term average reward maximizing setup, resolving a Dec-POMDP is equivalent to addressing the following problem ${\mathop {\min }\limits_{{\boldsymbol{\pi}_C}  \in \Upsilon } \mathop {\lim \sup }\limits_{T \to \infty } \frac{1}{T}{\mathbb{E}^{{\boldsymbol{\pi}}_C} }\left( -{\sum\limits_{t = 0}^{T - 1} r(t) } \right)}$.
Subsequently, to establish congruence with the problem in (\ref{p1}), the reward function is defined as:
\begin{equation}
r(t)=\mathcal{R}^{\pi_A}(\mathbf{w},a_S)=-\mathrm{GoT}^{\pi_A}(t)-C_S\cdot a_S(t).
\end{equation}

\subsection{Solutions to the Infinite-Horizon Dec-POMDP}

In general, solving a Dec-POMDP is known to be NEXP-complete for the finite-horizon setup \cite{bernstein2002complexity}, signifying that it necessitates formulating a conjecture about the solution non-deterministically, while each validation of a conjecture demands exponential time. For an infinite-horizon Dec-POMDP problem, finding an optimal policy for a Dec-POMDP problem is known to be undecidable. Nevertheless, within our considered model, both the sampling and decision-making processes are deterministic, given as $a_S(t)=\pi_S(\mathbf{w})$ and $a_A(t)=\pi_A(o_A)$. In such a circumstance, it is feasible to determine a joint optimal deterministic policy via Brute-Search-across the decision-making policy space.
\subsubsection{Optimal Solution}

The idea is based on the finding that, given a deterministic decision-making policy $\pi_A$, the sampling problem can be formulated as a standard fully observed MDP problem denoted by $\mathscr{M}^{\pi_A}_{\mathrm{MDP}}\triangleq\langle\mathcal{I},\mathcal{T}^{\pi_A},\mathcal{A}_S,\mathcal{R}\rangle$. 

\begin{definition}\label{proposition1}
	Given a deterministic decision-making policy $\pi_A$, the optimal sampling problem could be formulated by a typical fully observed MDP problem $\mathscr{M}^{\pi_A}_{\mathrm{MDP}}\triangleq\langle\mathcal{I},\mathcal{A}_S,\mathcal{T}_{\mathrm{MDP}}^{\pi_A},\mathcal{R}\rangle$, where the elements are given as follows:
	\begin{itemize}
	\item $\mathcal{I}$: the same as the pre-defined Dec-POMDP tuple.
	\item $\mathcal{A}_S=\left\{0,1\right\}$: the sampling and transmission action set.
	\item $\mathcal{T}^{\pi_A}$: the transition function given a deterministic decision-making policy $\pi_A$, which is 
	\begin{equation}
	\begin{aligned}
	\mathcal{T}^{\pi_A}(\mathbf{w},a_S,\mathbf{w}')=p^{\pi_A}\left(\mathbf{w}'|\mathbf{w},a_S\right)=\sum_{o_A\in\mathcal{O}_A}p\left(\mathbf{w}'|\mathbf{w},(a_S,\pi_A(o_A))\right)p_A(o_A|\mathbf{w})
	\end{aligned},
	\end{equation}
	where $p\left(\mathbf{w}'|\mathbf{w},(a_S,\pi_A(o_A))\right)$ could be obtained by Lemma \ref{as11} and $p(o_A|\mathbf{w})$ could be obtained by Lemma \ref{obf}.
	\item $\mathcal{R}$: the same as the pre-defined Dec-POMDP tuple.
	\end{itemize}
\end{definition}

We now proceed to solve the MDP problem $\mathscr{M}^{\pi_A}_{\mathrm{MDP}}$, which is characterized by a tuple $\langle\mathcal{I},\mathcal{T}^{\pi_A},\mathcal{A}_S,\mathcal{R}\rangle$. In order to deduce the optimal sampling policy under a deterministic decision-making policy $\pi_A$, it is imperative to resolve the Bellman equations \cite{bertsekas2012dynamic}:
\begin{equation}
\begin{aligned}
&\theta^*_{\pi_A}+V_{\pi_A}(\mathbf{w})=\mathop{\max}\limits_{a_S\in\mathcal{A}_A}\left\{\mathcal{R}^{\pi_A}(\mathbf{w},a_S)+\sum_{\mathbf{w}'\in\mathcal{I}}p(\mathbf{w}'|\mathbf{w},a_S)V_{\pi_A}(\mathbf{w}')\right\},\\
\end{aligned}
\end{equation}
where $V^{\pi_A}(\mathbf{w})$ is the value function and $\theta^*_{\pi_A}$ is the optimal long-term average reward given the decision-making policy $\pi_A$. We apply the relative value iteration (RVI) algorithm to solve this problem. The details are shown in Algorithm \ref{Algorithm 1}:
\begin{algorithm}
	\small
	\caption{The RVI Algorithm to Solve the MDP Given $\pi_A$}
	\label{Algorithm 1}
	\LinesNumbered
	\KwIn{The MDP tuple $\langle\mathcal{I},\mathcal{A}_S,\mathcal{T}^{\pi_A},\mathcal{R}\rangle$, $\epsilon$, $\pi_A$;}
	Initialization: $\forall\mathbf{w}\in\mathcal{I}$, $\tilde{V}^0_{\pi_A}(\mathbf{w})=0$, $\tilde{V}^{-1}_{\pi_A}(\mathbf{w})=\infty$, $k=0$ \;
	Choose $\mathbf{w}^{ref}$ arbitrarily\;
		\While {$||\tilde{V}_{\pi_A}^k(\mathbf{w})-\tilde{V}_{\pi_A}^{k-1}(\mathbf{w})||\ge \epsilon$}
		{
			$k=k+1$\;
		\For{$\mathbf{w}\in\mathcal{I}-\mathbf{w}^{ref}$}
		{
			$\begin{scriptsize}
			\begin{aligned}
			&\tilde{V}_{\pi_A}^k(\mathbf{w})=-g_k+\\
			&\mathop{\max}\limits_{a_S}\left\{\mathcal{R}(\mathbf{w},a_S)+\sum_{\mathbf{w}'\in\mathcal{I}-\mathbf{w}^{ref}}p(\mathbf{w}'|\mathbf{w},a_S)\tilde{V}^{k-1}_{\pi_A}(\mathbf{w}')\right\};
			\end{aligned}
			\end{scriptsize}$
		}			
		}
	$\begin{aligned}
		&\theta^*(\pi_A,\pi_S^*)=-\tilde{V}^{k}_{\pi_A}(\mathbf{w})\\
		&\mathop{\max}\limits_{a_S\in\mathcal{A}_S}\left\{\mathcal{R}(\mathbf{w},a_S)+\sum_{\mathbf{w}'\in\mathcal{I}}p(\mathbf{w}'|\mathbf{w},a_S)\tilde{V}^{k}_{\pi_A}(\mathbf{w}')\right\}
	\end{aligned}$\;
	\For{$\mathbf{w}\in\mathcal{I}$}{
	$\pi_S^{*}(\pi_A,\mathbf{w})=\mathop{\arg\max}\limits_{a_S}\left\{\mathcal{R}(\mathbf{w},a_S)+\sum_{\mathbf{w}'\in\mathcal{I}}p(\mathbf{w}'|\mathbf{w},a_S)\tilde{V}^{k}_{\pi_A}(\mathbf{w}')\right\}$;}
	\KwOut{$\pi_S^{*}(\pi_A)$, $\theta^*(\pi_A,\pi_S^*)$}
\end{algorithm}

With Definition \ref{proposition1} and Algorithm \ref{Algorithm 1} in hand, we could then perform a Brute-Force-Search across the decision-making policy space $\Upsilon_A$, thereby acquiring the joint sampling-decision-making policy. The algorithm is called RVI-Brute-Force-Search Algorithm, which is elaborated in Algorithm 2. In the following theorem, we discuss the optimality of the RVI-Brute-Force-Search Algorithm.
\begin{theorem}
	The RVI-Brute-Force-Search Algorithm (Algorithm \ref{Algorithm 2}) could achieve the optimal joint deterministic policies $(\pi_S^*,\pi_A^*)$, given that the transition function $\mathcal{T}^{\pi_A}$ follows a unichan. 
\end{theorem}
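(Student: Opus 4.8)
The plan is to establish the decomposition
\[
\min_{\boldsymbol{\pi}_C\in\Upsilon} J(\boldsymbol{\pi}_C)=\min_{\pi_A\in\Upsilon_A}\ \min_{\pi_S}\ J(\pi_S,\pi_A),
\]
where $J(\boldsymbol{\pi}_C)$ denotes the long-term average cost in $\mathcal{P}1$, and then to argue that the RVI-Brute-Force-Search Algorithm evaluates exactly the right-hand side. First I would note that, since $\mathcal{S}$ and $\mathcal{A}_A$ are finite, the set $\Upsilon_A$ of deterministic $\hat{X}_t$-dependent decision policies is finite ($|\Upsilon_A|=|\mathcal{A}_A|^{|\mathcal{S}|}$), so the outer enumeration terminates. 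For a fixed $\pi_A\in\Upsilon_A$, Definition~\ref{proposition1}---together with Lemma~\ref{as11} and Lemma~\ref{obf} (the latter collapsing the average over $o_A$ to a single deterministic term)---shows that the sampler's residual problem is precisely the fully observed average-cost MDP $\mathscr{M}^{\pi_A}_{\mathrm{MDP}}=\langle\mathcal{I},\mathcal{A}_S,\mathcal{T}^{\pi_A},\mathcal{R}\rangle$, whose optimal value equals $\min_{\pi_S}J(\pi_S,\pi_A)$.

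Next I would invoke classical average-cost dynamic programming under the unichain hypothesis on $\mathcal{T}^{\pi_A}$: the optimal gain $\theta^*_{\pi_A}$ is a constant independent of the initial state, the average-cost optimality equation
\[
\theta^*_{\pi_A}+V_{\pi_A}(\mathbf{w})=\max_{a_S\in\mathcal{A}_S}\Big\{\mathcal{R}^{\pi_A}(\mathbf{w},a_S)+\sum_{\mathbf{w}'\in\mathcal{I}}p^{\pi_A}(\mathbf{w}'|\mathbf{w},a_S)V_{\pi_A}(\mathbf{w}')\Big\}
\]
admits a solution, and any stationary policy attaining the maximum is average-cost optimal over all history-dependent (even randomized) sampling policies; in particular $\theta^*_{\pi_A}=\min_{\pi_S}J(\pi_S,\pi_A)$ and the $\limsup$ in $\mathcal{P}1$ is attained as a genuine limit. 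I would then invoke convergence of relative value iteration with reference-state normalization (after the standard aperiodicity transformation, if needed) to conclude that Algorithm~\ref{Algorithm 1} outputs exactly $\theta^*(\pi_A,\pi_S^*)=\theta^*_{\pi_A}$ and an optimal deterministic stationary sampler $\pi_S^*(\pi_A)$. Combining the two ingredients, the RVI-Brute-Force-Search Algorithm computes $\theta^*(\pi_A,\pi_S^*)$ for every $\pi_A\in\Upsilon_A$ and returns $\pi_A^*\in\arg\min_{\pi_A\in\Upsilon_A}\theta^*(\pi_A,\pi_S^*)$ together with $\pi_S^*(\pi_A^*)$; by the displayed decomposition this pair attains $\min_{\boldsymbol{\pi}_C}J(\boldsymbol{\pi}_C)$ over all deterministic joint policies, establishing optimality.

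The step I expect to be the main obstacle is the middle one: rigorously transferring, for each $\pi_A$, the existence of a state-independent optimal gain, the solvability of the average-cost optimality equation, the optimality of the greedy stationary sampler over all policy classes, and the convergence of the relative value iteration to it---this is exactly where the unichain assumption is indispensable, since it excludes multiple recurrent classes with differing gains and legitimizes the reference-state subtraction in Algorithm~\ref{Algorithm 1}. A secondary subtlety is justifying the interchange of the inner minimization over $\pi_S$ with the outer minimization over $\pi_A$ under the $\limsup$ criterion; this is resolved by the fact that, under unichain, the optimal $\limsup$-cost, $\liminf$-cost, and Ces\`aro-limit cost coincide and are all attained by a stationary deterministic policy, so the per-$\pi_A$ optimum genuinely coincides with $\min_{\pi_S}J(\pi_S,\pi_A)$.
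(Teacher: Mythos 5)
Your proposal is correct and follows essentially the same route as the paper: fix $\pi_A$, use the unichain assumption to invoke classical average-cost MDP theory (the paper cites \cite[Theorem 8.4.5]{puterman2014markov}) guaranteeing an optimal deterministic stationary sampler $\pi_S^*(\pi_A)$ with $\theta^*(\pi_A,\pi_S^*)\le\theta^*(\pi_A,\pi_S)$, then exploit the exhaustive search over the finite set $\Upsilon_A$ to chain $\theta^*(\pi_A^*,\pi_S^*)\le\theta^*(\pi_A,\pi_S^*)\le\theta^*(\pi_A,\pi_S)$. Your additional remarks on the finiteness of $\Upsilon_A$, RVI convergence, and the coincidence of $\limsup$/$\liminf$ criteria under unichain are careful elaborations of steps the paper leaves implicit, not a different argument.
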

	\noindent \emph{Proof.} If the the transition function $\mathcal{T}^{\pi_A}$ follows a unichan, we obtain from \cite[Theorem 8.4.5]{puterman2014markov} that for any $\pi_A$, we could obtain the optimal deterministic policy $\pi_S^*$ such that
	$\theta^*({\pi_A},\pi_S^*)\le\theta^*({\pi_A},\pi_S)$. Also, Algorithm 2 assures that for any $\pi_A$, $\theta^*({\pi^*_A},\pi_S^*)\le\theta^*({\pi_A},\pi_S^*)$. This leads to the conclusion that for any $\boldsymbol{\pi}_C=(\pi_S,\pi_A)\in \Upsilon$, we have that
	\begin{equation}
	\theta^*({\pi^*_A},\pi_S^*)\le\theta^*({\pi_A},\pi_S^*)\le\theta^*({\pi_A},\pi_S).
	\end{equation}
\vspace{-1em}

Nonetheless, the Brute-Force-Search across the decision-making policy space remains computationally expensive, as the size of the decision-making policy space $\Upsilon_A$ amounts to $|\Upsilon_A|=\mathcal{A}_A^{\mathcal{O}_A}$. This implies that executing the RVI algorithm $\mathcal{A}_A^{\mathcal{O}_A}$ times is necessary to attain the optimal policy. Consequently, although proven to be optimal, such an algorithm is ill-suited for scenarios where $\mathcal{O}_A$ and $\mathcal{A}_A$ are considerably large. To ameliorate this challenge, we propose a sub-optimal, yet computation-efficient alternative in the subsequent section.
\subsubsection{A Sub-optimal Solution}
The method here is to instead find a locally optimal algorithm to circumvent the high complexity of the Brute-Force-Search-based approach. We apply the Joint Equilibrium-Based Search for Policies (JESP) for \emph{Nash equilibrium} solutions \cite{2003Taming}. Within this framework, the sampling policy is optimally responsive to the decision-making policy and vice versa, \emph{i.e.}, $\forall \pi_S, \pi_A,\theta(\pi_S^*,\pi_A^*)\le\theta(\pi_S,\pi_A^*),\theta(\pi_S^*,\pi_A^*)\le\theta(\pi_S^*,\pi_A)$. 

\begin{algorithm}
	\small
	\caption{The RVI-Brute-Force-Search Algorithm}
	\label{Algorithm 2}
	\LinesNumbered
	\KwIn{The Dec-POMDP tuple $\mathscr{M}_{DEC-POMDP}\triangleq\left\langle n, \mathcal{I}, \mathcal{A}, \mathcal{T}, \Omega, \mathcal{O}, \mathcal{R}  \right\rangle$;}
	\For{$\pi_A\in\Upsilon$}{
		Formulate the MDP problem $\mathscr{M}^{\pi_A}_{\mathrm{MDP}}\triangleq\langle\mathcal{I},\mathcal{A}_S,\mathcal{T}_{\mathrm{MDP}}^{\pi_A},\mathcal{R}\rangle$ as given in Definition \ref{proposition1}\;
		Run Algorithm 1 to obtain $\pi_S^*(\pi_A)$ and $\theta^*({\pi_A},\pi_S^*)$\;}
	Calculate the optimal joint policy:
	$\begin{cases}
	&\pi_A^* =\mathop{\arg\min}\nolimits_{\pi_A}\theta_{\pi_A}^* \\
	&\pi_S^*=\pi_S(\pi^*_A)
	\end{cases}$\;
	\KwOut{$\pi_S^{*}$, $\pi_A^*$}
\end{algorithm}

To search for the \emph{Nash equilibrium}, we first search for the optimal sampling policy prescribed a decision-making policy. This problem can be formulated as a standard fully observed MDP problem denoted by $\mathscr{M}^{\pi_A}_{\mathrm{MDP}}\triangleq\langle\mathcal{I},\mathcal{A}_S,\mathcal{T}_{\mathrm{MDP}}^{\pi_A},\mathcal{R}\rangle$ (see Definition \ref{proposition1}). Next, we alternatively fix the sampling policy $\pi_S$ and solve for the optimal decision-making policy $\pi_A$. This problem can be modeled as a memoryless partially observable Markov decision process (POMDP), denoted by $\mathscr{M}^{\pi_S}_{\mathrm{POMDP}}\triangleq\langle\mathcal{I},\mathcal{A}_A,\Omega_A,\mathcal{O}_A,\mathcal{T}^{\pi_S}_{\mathrm{POMDP}},\mathcal{R}\rangle$ (see Definition \ref{proposition2}). Then, by alternatively iterating between $\mathcal{A}gent_S$ and $\mathcal{A}gent_A$, we could obtain the \emph{Nash equilibrium} between the two agents.

\begin{definition}\label{proposition2}
	Given a deterministic sampling policy $\pi_S$, the optimal sampling problem could be formulated as a memoryless POMDP problem $\mathscr{M}^{\pi_S}_{\mathrm{POMDP}}\triangleq\langle\mathcal{I},\mathcal{A}_A,\Omega_A,\mathcal{O}_A,\mathcal{T}^{\pi_S}_{\mathrm{POMDP}},\mathcal{R}\rangle$, where the elements are given as follows:
	\begin{itemize}
		\item $\mathcal{I}$, $\Omega_A$, $\mathcal{A}_A$, and $\mathcal{O}_A$: the same as the pre-defined Dec-POMDP tuple.
		\item $\mathcal{T}^{\pi_S}_{\mathrm{POMDP}}$: the transition function given a deterministic sampling policy $\pi_S$, which is 
		\begin{equation}\label{TransitionPOMDP}
		\begin{aligned}
		\mathcal{T}^{\pi_S}_{\mathrm{POMDP}}(\mathbf{w},a_A,\mathbf{w}')=p^{\pi_S}\left(\mathbf{w}'|\mathbf{w},a_A\right)=p\left(\mathbf{w}'|\mathbf{w},(\pi_S(\mathbf{w}),a_A)\right)
		\end{aligned},
		\end{equation}
		where $p\left(\mathbf{w}'|\mathbf{w},(\pi_S(\mathbf{w}),a_A)\right)$ could be obtained by Lemma \ref{as11}.
		\item $\mathcal{R}$: the reward function is denoted as $\mathcal{R}^{\pi_S}(\mathbf{w},a_A)$, which could be obtained by (\ref{got}).
	\end{itemize}
\end{definition}
We then proceed to solve the memoryless POMDP problem discussed in Definition \ref{proposition2} to obtain the deterministic decision-making policy. Denote $p_{\pi_A}^{\pi_S}(\mathbf{w}'|\mathbf{w})$ as the transition probability $\Pr\left\{\mathbf{W}_{t+1}=\mathbf{w}'|\mathbf{W}_t=\mathbf{w}\right\}$ under the sampling policy $\pi_A$ and $\pi_S$, we then have that
\begin{equation}\label{transitionp2}
	p_{\pi_A}^{\pi_S}(\mathbf{w}'|\mathbf{w})=\sum_{o_A\in\mathcal{O}_A}p(o_A|\mathbf{w})\sum_{a_A\in\mathcal{A}_A}p^{\pi_S}(\mathbf{w}'|\mathbf{w},a_A)\pi_A(a_A|o_A),
\end{equation}
where $p(o_A|\mathbf{w})$ could be obtained by Lemma \ref{as11} and $p^{\pi_S}(\mathbf{w}'|\mathbf{w},\pi_A(o_A))$ is obtained by (\ref{TransitionPOMDP}). By assuming the ergodicity of the $p^{\pi_S}_{\pi_A}(\mathbf{w}'|\mathbf{w})$ and rewrite it as a matrix $\mathbf{P}_{\pi_A}^{\pi_S}$, we could then solve out the stationary distribution of the system status under the policies $\pi_S$ and $\pi_A$, denoted as $\boldsymbol{\mu}_{\pi_A}^{\pi_S}$, by solving the balance equations:
\begin{equation}\label{station}
\boldsymbol{\mu}_{\pi_A}^{\pi_S} \mathbf{P}_{\pi_A}^{\pi_S}=\boldsymbol{\mu}_{\pi_A}^{\pi_S} ,{~}\boldsymbol{\mu}_{\pi_A}^{\pi_S}\mathbf{e}=1,
\end{equation}
where $\mathbf{e}$ is the all one vector $[1,\cdots,1]_{|\mathcal{I}|\times1}$, $\boldsymbol{\mu}_{\pi_A}^{\pi_S}$ could be solved out by {Cramer's rules}. Denote ${\mu}_{\pi_A}^{\pi_S}(\mathbf{w})$ as the stationary distribution of $\mathbf{w}$. Also, we denote $r_{\pi_A}^{\pi_S}(\mathbf{w})$ as the expectation reward of global system status $\mathbf{w}$ under policies $\pi_A$ and $\pi_S$. It could be calculated as:
\begin{equation}\label{stationreward}
r_{\pi_A}^{\pi_S}(\mathbf{w})=\sum_{o_A\in\mathcal{O}_A}p(o_A|\mathbf{w})\mathcal{R}^{\pi_S}(\mathbf{w},\pi_A(o_A)).
\end{equation}
The performance measure is the long-term average reward:
\begin{equation}\label{averagereward}
\eta_{\pi_A}^{\pi_S}=
{\mathop {\lim \sup }\limits_{T \to \infty } \frac{1}{T}{\mathbb{E}^{(\pi_S,\pi_A)} }\left({\sum\limits_{t = 0}^{T - 1} r(t) } \right)}=\sum_{\mathbf{w}\in\mathcal{I}}{\mu}_{\pi_A}^{\pi_S}(\mathbf{w})\cdot r_{\pi_A}^{\pi_S}(\mathbf{w})
\end{equation}
With $\eta_{\pi_A}^{\pi_S}$ in hand, we then introduce the relative reward $g_{\pi_A}^{\pi_S}(\mathbf{w})$, defined by
\begin{equation}
g_{\pi_A}^{\pi_S}(\mathbf{w})\triangleq{\mathop {\lim \sup }\limits_{T \to \infty } \frac{1}{T}{\mathbb{E}^{(\pi_S,\pi_A)} }\left[{\sum\limits_{t = 0}^{T - 1} \left(r(t)-\eta_{\pi_A}^{\pi_S}\right) }\left|\mathbf{W}_0=\mathbf{w}\right. \right]},
\end{equation}
which satisfies the Poisson equations \cite{bertsekas1995neuro}:
\begin{equation}\label{poisson}
\eta_{\pi_A}^{\pi_S}+g_{\pi_A}^{\pi_S}(\mathbf{w})=r_{\pi_A}^{\pi_S}(\mathbf{w})+\sum_{\mathbf{w}'\in\mathcal{I}}p_{\pi_A}^{\pi_S}(\mathbf{w}'|\mathbf{w})g_{\pi_A}^{\pi_S}\mathbf{(w}').
\end{equation}
Denote $\mathbf{g}^{\pi_S}_{\pi_A}$ as the vector consisting of $g_{\pi_A}^{\pi_S}(\mathbf{w})$, $\mathbf{r}^{\pi_S}_{\pi_A}$ as the vector consisting of ${r}^{\pi_S}_{\pi_A}(\mathbf{w}),\mathbf{w}\in\mathcal{I}$. $\mathbf{g}^{\pi_S}_{\pi_A}$ could be solved by utilizing \cite{puterman2014markov}:
\begin{equation}\label{iterations}
\mathbf{g}^{\pi_S}_{\pi_A}=\left[(I-\mathbf{P}^{\pi_S}_{\pi_A}+\mathbf{e}\boldsymbol{\mu}^{\pi_S}_{\pi_A})^{-1}-\mathbf{e}\boldsymbol{\mu}^{\pi_S}_{\pi_A}\right]\mathbf{r}^{\pi_S}_{\pi_A}
\end{equation}

With the relative reward $g_{\pi_A}^{\pi_S}$ in hand, we then introduce $Q_{\pi_A}^{\pi_S}(\mathbf{w},a_A)$ and $Q_{\pi_A}^{\pi_S}(o_A,a_A)$ as follows:
\begin{lemma}$Q_{\pi_A}^{\pi_S}(\mathbf{w},a_A)$ and $Q_{\pi_A}^{\pi_S}(o_A,a_A)$ are defined and calculated as:
	\begin{equation}
	\begin{aligned}
	&Q_{\pi_A}^{\pi_S}(\mathbf{w},a_A)\triangleq{\mathop {\lim \sup }\limits_{T \to \infty } \frac{1}{T}{\mathbb{E}^{(\pi_S,\pi_A)} }\left[{\sum\limits_{t = 0}^{T - 1} \left(r(t)-\eta_{\pi_A}^{\pi_S}\right) }\left|\mathbf{W}_0=\mathbf{w},a_A(0)=a_A\right. \right]}\\
	&=\mathcal{R}^{\pi_S}(\mathbf{w},a_A)-\eta_{\pi_A}^{\pi_S}+\sum_{\mathbf{w}'\in\mathcal{I}}p^{\pi_S}(\mathbf{w}'|\mathbf{w},a_A)g_{\pi_A}^{\pi_S}\mathbf{(w}'),
	\end{aligned}
	\end{equation}
	\begin{equation}
	\begin{aligned}
	&Q_{\pi_A}^{\pi_S}(o_A,a_A)\triangleq{\mathop {\lim \sup }\limits_{T \to \infty } \frac{1}{T}{\mathbb{E}^{(\pi_S,\pi_A)} }\left[{\sum\limits_{t = 0}^{T - 1} \left(r(t)-\eta_{\pi_A}^{\pi_S}\right) }\left|o_A^{(0)}=o_A,a_A(0)=a_A\right. \right]}\\
	&=\sum_{\mathbf{w}\in\mathcal{I}}p_{\pi_A}^{\pi_S}(\mathbf{w}|o_A)Q_{\pi_A}^{\pi_S}(\mathbf{w},a_A),
	\end{aligned}
	\end{equation}
	where $p_{\pi_A}^{\pi_S}(\mathbf{w}'|\mathbf{w})$ can be obtained by (\ref{transitionp2}) and $p_{\pi_A}^{\pi_S}(\mathbf{w}|o_A)$ can be obtained by the Bayesian formula:
	\begin{equation}
	p_{\pi_A}^{\pi_S}(\mathbf{w}|o_A)=\frac{{\mu}_{\pi_A}^{\pi_S}(\mathbf{w})p(o_A|\mathbf{w})}{\sum_{\mathbf{w}\in\mathcal{I}}{\mu}_{\pi_A}^{\pi_S}(\mathbf{w})p(o_A|\mathbf{w})}.
	\end{equation}
\end{lemma}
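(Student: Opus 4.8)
The plan is to obtain both identities from a single one-step (Poisson-type) decomposition of the differential reward sum together with the law of total expectation; the ergodicity/unichain hypotheses invoked around~(\ref{station})--(\ref{poisson}) guarantee that the $\limsup$'s in the definitions are genuine limits and that the relative reward $g_{\pi_A}^{\pi_S}$ is well defined, so once the conditioning is set up correctly the rest is bookkeeping over the finite sets $\mathcal{I}$ and $\mathcal{O}_A$.

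\textbf{Step 1: the state--action form $Q_{\pi_A}^{\pi_S}(\mathbf{w},a_A)$.} Starting from its definition, I would split the running sum $\sum_{t=0}^{T-1}\big(r(t)-\eta_{\pi_A}^{\pi_S}\big)$ into the $t=0$ term and the tail $\sum_{t=1}^{T-1}\big(r(t)-\eta_{\pi_A}^{\pi_S}\big)$. Conditioned on $\mathbf{W}_0=\mathbf{w}$ and $a_A(0)=a_A$, the sampler action is pinned to $a_S(0)=\pi_S(\mathbf{w})$, so the $t=0$ reward is deterministically $\mathcal{R}^{\pi_S}(\mathbf{w},a_A)$, and $\mathbf{W}_1$ is drawn from $p^{\pi_S}(\mathbf{w}'|\mathbf{w},a_A)$ as in~(\ref{TransitionPOMDP}) (whose entries come from Lemma~\ref{as11}). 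From $t=1$ onward both agents follow $(\pi_S,\pi_A)$, so by the Markov property the conditional expectation of the tail given $\mathbf{W}_1=\mathbf{w}'$ converges, as $T\to\infty$, to the relative reward $g_{\pi_A}^{\pi_S}(\mathbf{w}')$ characterized by~(\ref{poisson}). Applying the tower property over $\mathbf{W}_1$ and interchanging the limit with the finite sum over $\mathbf{w}'\in\mathcal{I}$ gives
\begin{equation}
Q_{\pi_A}^{\pi_S}(\mathbf{w},a_A)=\mathcal{R}^{\pi_S}(\mathbf{w},a_A)-\eta_{\pi_A}^{\pi_S}+\sum_{\mathbf{w}'\in\mathcal{I}}p^{\pi_S}(\mathbf{w}'|\mathbf{w},a_A)\,g_{\pi_A}^{\pi_S}(\mathbf{w}').
\end{equation}

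\textbf{Step 2: the observation--action form $Q_{\pi_A}^{\pi_S}(o_A,a_A)$.} Here the decision-maker conditions only on its local observation $o_A^{(0)}=o_A$ (in the stationary regime induced by $\boldsymbol{\mu}_{\pi_A}^{\pi_S}$) and on its first action $a_A(0)=a_A$. Because $o_A$ is a deterministic function of the global state by Lemma~\ref{obf}, conditioning additionally on $\mathbf{W}_0=\mathbf{w}$ makes $o_A$ redundant for the future trajectory, so $\mathbb{E}[\,\cdot\,|\,o_A^{(0)}=o_A,\mathbf{W}_0=\mathbf{w},a_A(0)=a_A]$ collapses to $Q_{\pi_A}^{\pi_S}(\mathbf{w},a_A)$. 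Averaging over $\mathbf{W}_0$ with the stationary posterior $p_{\pi_A}^{\pi_S}(\mathbf{w}|o_A)$ — which is exactly Bayes' rule applied to the stationary law $\mu_{\pi_A}^{\pi_S}(\mathbf{w})$ and the observation kernel $p(o_A|\mathbf{w})$, yielding the stated fraction — then gives
\begin{equation}
Q_{\pi_A}^{\pi_S}(o_A,a_A)=\sum_{\mathbf{w}\in\mathcal{I}}p_{\pi_A}^{\pi_S}(\mathbf{w}|o_A)\,Q_{\pi_A}^{\pi_S}(\mathbf{w},a_A).
\end{equation}

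\textbf{Main obstacle.} The algebra is routine; the delicate points are the limiting/measure-theoretic justifications. First, one must argue that the tail conditional expectation really converges to $g_{\pi_A}^{\pi_S}(\mathbf{w}')$ and that the $\limsup$'s are true limits — this is precisely where the ergodicity of $\mathbf{P}_{\pi_A}^{\pi_S}$ (the unichain structure) is used, and the cleanest route is to first establish the Poisson equation~(\ref{poisson}) and then read $Q$ off it. Second, one must make explicit that $Q_{\pi_A}^{\pi_S}(o_A,a_A)$ is defined under the \emph{stationary} regime, so that the correct state prior behind the Bayesian posterior is $\boldsymbol{\mu}_{\pi_A}^{\pi_S}$ and not an arbitrary initial distribution; without this qualification the posterior weighting would not be the one claimed. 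Once these two points are pinned down, the result follows from the law of total expectation and the finiteness of $\mathcal{I}$ and $\mathcal{O}_A$.
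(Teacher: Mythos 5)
Your proposal is correct and follows essentially the same route as the paper's own proof: the first identity via splitting off the $t=0$ term and identifying the tail expectation with $g_{\pi_A}^{\pi_S}(\mathbf{w}')$, and the second via the law of total expectation over $\mathbf{W}_0$ with the stationary Bayesian posterior $p_{\pi_A}^{\pi_S}(\mathbf{w}|o_A)$ (the paper phrases the redundancy step as $p_{\pi_A}^{\pi_S}(\mathbf{w}|o_A,a_A)=p_{\pi_A}^{\pi_S}(\mathbf{w}|o_A)$, which matches your observation). Your additional remarks on ergodicity justifying the limits and on the stationary regime underlying the posterior are sound refinements of the same argument.
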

\begin{proof}
	Please refer to Appendix \ref{prooflemma3}.
\end{proof}
With $Q_{\pi_A}^{\pi_S}(o_A,a_A)$ in hand, it is then easy to conduct the Policy Iteration (PI) Algorithm with Step Sizes \cite{li2011finding} to iteratively improve the deterministic memoryless decision-making policy $\pi_A$. The detailed steps are shown in Algorithm \ref{Algorithm 3}.

\begin{algorithm}
	\small
	\caption{The PI Algorithm with Step Size to Solve the POMDP Given  $\pi_S$}
	\label{Algorithm 3}
	\LinesNumbered
	\KwIn{The POMDP tuple $\mathscr{M}^{\pi_S}_{\mathrm{POMDP}}\triangleq\langle\mathcal{I},\mathcal{A}_A,\Omega_A,\mathcal{O}_A,\mathcal{T}^{\pi_S}_{\mathrm{POMDP}},\mathcal{R}\rangle$, $\epsilon$, $\pi_S$;}
	Initialization: randomly choose decision-making policy $\pi_A^{(1)}$, $\eta_{{\pi_A}^{(0)}}^{\pi_S}=0$, $\eta_{{\pi_A}^{(-1)}}^{\pi_S}=\infty$, $k=0$\;
	\While{$|\eta_{{\pi_A}^{(k)}}^{\pi_S}-\eta_{{\pi_A}^{(k-1)}}^{\pi_S}|\ge\epsilon$}{
	$k=k+1$\;
	Calculate the transition probability $p_{{\pi_A}^{(k)}}^{\pi_S}(\mathbf{w}'|\mathbf{w})$ by (\ref{transitionp2})\;
	Solve the stationary distribution $\boldsymbol{\mu}_{{\pi_A}^{(k)}}^{\pi_S}$ by the stationary equations (\ref{station})\;
	Calculate the expectation reward $r_{{\pi_A}^{(k)}}^{\pi_S}(\mathbf{w})$ by (\ref{stationreward}) and the long-term average reward $\eta_{{\pi_A}^{(k)}}^{\pi_S}$ by (\ref{averagereward})\;
	Calculate the relative reward $g_{{\pi_A}^{(k)}}^{\pi_S}(\mathbf{w})$ by (\ref{iterations})\;
	\For{$o_A\in\mathcal{O}_A$}{
		\For{$a_A\in\mathcal{A}_A$}{
				Calculate $Q_{{\pi_A}^{(k)}}^{\pi_S}(o_A,a_A)$ by Lemma 3\;}
	}
	\For{$o_A\in\mathcal{O}_A$}{
	$\pi_A(\cdot|o_A)=\mathop{\arg\max}\limits_{\pi_A(\cdot|o_A)}Q_{\pi_A^{(k)}}^{\pi_S}(o_A,a_A)$\;
	$\pi_A^{(k)}(\cdot|o_A)=\pi_A^{(k-1)}(\cdot|o_A)+\delta_k*(\pi_A^{(k-1)}(\cdot|o_A)-\pi_A(o_A))$\;}
	}
	\For{$o_A\in\mathcal{O}_A$}{
	$\pi_A^{*}(o_A)=\mathop{\arg\max}\limits_{a_A}\pi_A^{(k)}(a_A|o_A)$\;}
	$\theta^*(\pi_S,\pi_A^*)=\eta_{{\pi_A}^{*}}^{\pi_S}$\;
	\KwOut{$\pi_A^{*}(\pi_S)$, $\theta^*(\pi_S,\pi_A^*)$}
\end{algorithm}
Thus far, we have solved two problems: $i$) by capitalizing on Definition \ref{proposition1} and Algorithm \ref{Algorithm 1}, we have ascertained an optimal sampling strategy $\pi_S^*$ contingent upon the decision-making policy $\pi_A$; $ii$) by harnessing Definition \ref{proposition2} and Algorithm \ref{Algorithm 3}, we have determined an optimal actuation strategy $\pi_S^*$ predicated on the decision-making policy $\pi_A$. Consequently, we could iteratively employ Algorithm 1 and Algorithm \ref{Algorithm 3} in an alternating fashion, whereby Algorithm 1 yields the optimal sampling strategy $\pi_S^{*(k)}(\pi_A^{(k-1)})$, subsequently serving as an input for Algorithm 3 to derive the decision-making policy $\pi_A^{*(k)}(\pi_S^{*(k)})$. The procedure shall persist until the average reward $\theta^*(\pi_S^{*(k)},\pi_A^{*(k)})$ reaches convergence, indicating that the solution achieves a \emph{Nash equilibrium} between the sampler and the actuator.The intricacies of the procedure are delineated in Algorithm \ref{Algorithm 4}.
\begin{remark}
	Generally, the JESP algorithm should restart the algorithm by randomly choosing the initial decision-making policy $\pi_A^{*(1)}$ to ensure a good solution, as the initialization of decision-making policy $\pi_A^{*(1)}$ may often lead to poor local optima. We here investigate a heuristic initialization to find the solution quickly and reliably. Specifically, we assume that the decision-maker is fully observable and solve a MDP problem:
	\begin{definition}
		$\mathscr{M}_{\mathrm{MDP}}\triangleq\langle\mathcal{I},\mathcal{A}_A,\mathcal{T}_{\mathrm{MDP}},\mathcal{R}\rangle$, where the elements are given as follows:
		\begin{itemize}
			\item $\mathcal{I}$: the set of $(X_t,\Phi_t)\in\mathcal{S}\times\mathcal{V}$.
			\item $\mathcal{A}_A=\left\{a_0,a_{M-1}\right\}$: the decision-making set.
			\item $\mathcal{T}$: the transition function, given as
			\begin{equation}
			\begin{aligned}
			&\mathcal{T}((X_t,\Phi_t),a_A,(X_{t+1},\Phi_{t+1}))\\
			&=p(X_{t+1}|X_t,a_A,\Phi_{t})\cdot p(\Phi_{t+1}|\Phi_{t}),
			\end{aligned}
			\end{equation}
			where $p(X_{t+1}|X_t,a_A,\Phi_{t})$ and $p(\Phi_{t+1}|\Phi_{t})$ could be obtained by (\ref{Source}) and (\ref{context}), respectively.
			\item $\mathcal{R}$: the same as the pre-defined Dec-POMDP tuple.
		\end{itemize}
	\end{definition}
\end{remark}
Through solving the above MDP problem, we could explicitly obtain the Q function $Q(X_t,\Phi_t,a_A)$, define $Q(X_t,a_A)$ as ${\mathbb{E}}_{\Phi_{t}}[Q(X_t,\Phi_t,a_A)]$, the initial decision-making policy $\pi_A^{*(1)}$ is given as
\begin{equation}\label{piaini}
	\pi_A^{*(1)}(\hat{X}_t)={\arg\min}_{a_A}Q(\hat{X}_t,a_A).
\end{equation} 
\begin{algorithm}\small
	\caption{The Improved JESP Algorithm}
	\label{Algorithm 4}
	\LinesNumbered
	\KwIn{The Dec-POMDP tuple $\mathscr{M}_{DEC-POMDP}\triangleq\left\langle n, \mathcal{I}, \mathcal{A}, \mathcal{T}, \Omega, \mathcal{O}, \mathcal{R}  \right\rangle$, $\epsilon$;}
	Initialization: $\theta^{*}_0=0$, $\theta^{*}_{-1}=\infty$, $k=0$\;
	Initialize $\pi_A^{*(1)}$ by calculating (\ref{piaini})\;
	\While {$||\theta^{*}_k-\theta^{*}_{k-1}||\ge \epsilon$}
	{$k=k+1$\;
	Formulate the MDP problem $\mathscr{M}^{\pi_A^{*(k)}}_{\mathrm{MDP}}\triangleq\langle\mathcal{I},\mathcal{A}_S,\mathcal{T}_{\mathrm{MDP}}^{\pi_A^{*(k)}},\mathcal{R}\rangle$ as given in Definition \ref{proposition1}\;
	Run Algorithm 1 to obtain $\pi_S^{*(k)}$\;
	Formulate the POMDP problem $\mathscr{M}^{\pi_S^{*(k)}}_{\mathrm{POMDP}}\triangleq\langle\mathcal{I},\mathcal{A}_A,\Omega_A,\mathcal{O}_A,\mathcal{T}^{\pi_S^{*(k)}}_{\mathrm{POMDP}},\mathcal{R}\rangle$ as given in Definition \ref{proposition2}\;
	Run Algorithm 3 to obtain $\theta^*(\pi_S^{*(k)},\pi_A^{*(k)})$ and $\pi_A^{*(k)}$\;
	$\theta^{*}_k=\theta^*(\pi_S^{*(k)},\pi_A^{*(k)})$\;		
	}
	The joint sub-optimal policy is:
	$\pi_S^*=\pi_S^{*(k)},\pi_S^*=\pi_A^{*(k)}$\;
	The sub-optimal average reward is: $\theta^*=\theta^{*}_k$\;
	\KwOut{$\pi_S^{*}$, $\pi_A^*$, $\theta^*$}
\end{algorithm}
\section{Simulation Results}\label{sectionV}

Tradition metrics such as Age of Information have been developed under the assumption that a fresher packet or more accurate packet, capable of aiding in source reconstruction, holds a higher value for the receiver, thus promoting goal-oriented decision-making. Nevertheless, the manner in which a packet update impacts the system's \emph{utility} via decision-making remains an unexplored domain. Through the simulations, we endeavor to elucidate the following observations of interest:

\noindent $\bullet$ \emph{GoT-optimal vs. State-of-the-art}. In contrast with the state-of-the-art sampling policies, the proposed goal-oriented \emph{sampler \& decision-maker} co-design is capable of concurrently maximizing goal attainment and conserving communication resources, accomplishing a closed-loop \emph{utility} optimization via sparse sampling. (See Fig. \ref{RatevsCostComparision} and \ref{Costcomparison})

\noindent $\bullet$ \emph{Separate Design vs. Co-Design}. Compared to the two-stage sampling-decision-making separate framework, the co-design of sampling and decision-making not only achieves superior goal achievement but also alleviates resource expenditure engendered by communication and actuation implementation. (See Fig. \ref{RatevsCostComparision} and \ref{Costcomparison})

\noindent $\bullet$ \emph{Optimal Brute-Force-Search vs. Sub-optimal JESP}. {Under different successful transmission probability $p_S$ and sampling cost $C_S$, the sub-optimal yet computation-efficient JESP algorithm will converge to near-optimal solutions. }(See Fig. \ref{Ovssub})

\noindent $\bullet$ \emph{Trade-off: Transmission vs. Actuation}. There is a trade-off between transmission and actuation in terms of resource expenditure: under reliable channel conditions, it is apt to increase communication overhead to ensure effective decision-making; conversely, under poor channel conditions, it is advisable to curtail communication expenses and augment actuation resources to attain maximal system \emph{utility}. (See Fig. \ref{tradeoff})

\subsection{Comparing Benchmarks}\label{sectionVA}
Fig. \ref{RatevsCostComparision} illustrates the simulation results, which characterizes the \emph{utility} by the average cost composed by status inherent cost $C_1(X_t,\Phi_t)$, actuation gain cost $C_2(\pi_A(\hat{X}_t))$, actuation inherent cost $C_3(\pi_A(\hat{X}_t))$, and sampling cost $C_S$. For the simulation setup, we set $\mathcal{A}_A=\left\{a_0,\cdots,a_{10}\right\}$, $\mathcal{S}=\left\{s_0,s_1,s_2\right\}$, $\mathcal{V}=\left\{v_0,v_1\right\}$ and the corresponding costs are given as:
\begin{equation}
\mathbf{C}_1^{|\mathcal{V}|\times|\mathcal{S}|}=\begin{pmatrix}
0 &20 &50\\
0 &10 &20
\end{pmatrix},
\end{equation}
$C_2(\pi_A(\hat{X}_t))$ and $C_3(\pi_A(\hat{X}_t))$ are both linear to the actuation with $C_2(\pi_A(\hat{X}_t)=C_g\cdot\pi_A(\hat{X}_t)$ and $C_3(\pi_A(\hat{X}_t))=C_I\cdot\pi_A(\hat{X}_t)$. The following comparing benchmarks of interest are considered:

\noindent $\bullet$ {\textbf{Uniform.}} Sampling is triggered periodically in this policy. In this case, $a_S(t)=\mathbbm{1}_{\left\{t=K*\Delta\right\}}$, where $K=0,1,2,\cdots$ and $\Delta\in\mathbb{N}^+$. For each $\Delta$, we could calculate the sampling rate as $1/\Delta$ and explicitly obtain the long-term average cost through Markov chain simulations under pre-defined decision-making policy $\pi_A=[a_0,a_3,a_7]$. The setup of $\Delta$ represents a trade-off between \emph{utility} and sampling frequency, as depicted in Fig. \ref{RatevsCostComparision}: If $C_S$ is minimal, sampling and transmission will contribute positively to the \emph{utility}; If the sampling action is expansive, sampling may not yield adequate \emph{utility}; If a single sampling consumes moderate resource, the \emph{utility} will exhibit a U-shaped pattern in terms of the sampling rate.

\noindent $\bullet$ {\textbf{Age-aware.}} Sampling is executed when the AoI attains a predetermined threshold, a principle that has been established as a threshold-based result for AoI-optimal sampling \cite{sun2019sampling}. In this case, $a_S(t)=\mathbbm{1}_{\left\{\mathrm{AoI(t)}>\delta\right\}}$, where the AoI-optimal threshold $\delta$ can be ascertained using the Bisection method delineated in Algorithm 1 of \cite{sun2019sampling}. In this context, rather than determining a fixed threshold that minimized AoI, we dynamically shift the threshold to explore the balance between sampling and \emph{utility}. As evidenced in Fig. \ref{RatevsCostComparision}, the \emph{utility} derived from this sampling policy consistently surpasses that of the uniform sampling policy.

\noindent $\bullet$ {\textbf{Change-aware.}} Sampling is triggered whenever the source status changes. In such a case, $a_S(t)=\mathbbm{1}_{\left\{X_t\ne X_{t-1}\right\}}$. The performance of this policy is dependent on the dynamics of the system, \emph{i.e.}, if the semantics of the sources transfers frequently, then the sampling rate will be higher. The \emph{utility} of this policy may be arbitrarily detrimental owing to its semantics-unaware nature. In our considered model, the Change-aware policy will turn out $89.5\%$ to have the unsynchronized status $X_t=s_0$ while $\hat{X}_t=s_3$. In this case, the actuator will implement the actuation $a_A=a_7$ according to the estimate $\hat{X}_t=s_3$, which will in turn make the status $X_t$ converges to $s_0$.  

\noindent $\bullet$ {\textbf{Optimal MSE.}} This is a type of E2E goal-oriented sampling policy if the goal is determined as achieving real-time reconstruction. Nevertheless, this policy disregards the semantics conveyed by the packet and the ensuing actuation updating precipitated by semantics updates. The problem could be formulated as a standard MDP formulation and solved out through RVI Algorithm. The sampling rate and average cost are obtained given the MSE-optimal sampling policy and the pre-defined decision-making policy $\pi_A=[a_0,a_3,a_7]$. 

\noindent $\bullet$ {\textbf{Optimal AoII (also optimal AoCI).}} From \cite{AoII}, it has been proven that the AoII-optimal sampling policy turns out to be $a_S(t)=\mathbbm{1}_{\left\{X_t\ne \hat{X}_t \right\}}$. From \cite{AoCI}, the AoCI-optimal sampling policy is $a_S(t)=\mathbbm{1}_{\left\{X_t\ne {X}_{t-\mathrm{AoI}(t)} \right\}}$. Note that $\hat{X}_t={X}_{t-\mathrm{AoI}(t)}$, these two sampling policies are equivalent. The sampling rate and average cost are obtained given this sampling policy and the greedy-based decision-making policy. $\pi_A=[a_0,a_3,a_7]$.
\subsection{Separate Design \emph{vs.} Co-Design}\label{sectionVB}
Conventionally, the sampling and actuation policies are designed in a two-stage manner: they first emphasize open-loop performance metrics such as average mean squared error (MSE) or average Age of Information (AoI), we then focus on the decision-making policy $\pi_A$ design. Specifically, we consider that $\pi_A$ is predetermined using a greedy methodology:
\begin{equation}\label{greedy}
\pi_A(\hat{X}_t)=\mathop{\arg\min}\limits_{a_A\in\mathcal{S}_A}\mathop{\mathbb{E}}\limits_{\Phi_{t}}\left\{\left[C_1(\hat{X}_t,\Phi_t) - C_2(\pi_A(\hat{X}_t))\right]^+ C_3(\pi_A(\hat{X}_t))\right\}.
\end{equation}
This greedy approach entails selecting the actuation that minimizes cost in the current step, given that the estimate $\hat{X}_t$ is perfect. By calculating (\ref{greedy}), we obtain $\pi_A = [a_0,a_3,a_7]$.

\begin{figure}
		\centering
	\begin{minipage}[t]{0.48\textwidth}
			\centering	\includegraphics[angle=0,width=1\textwidth]{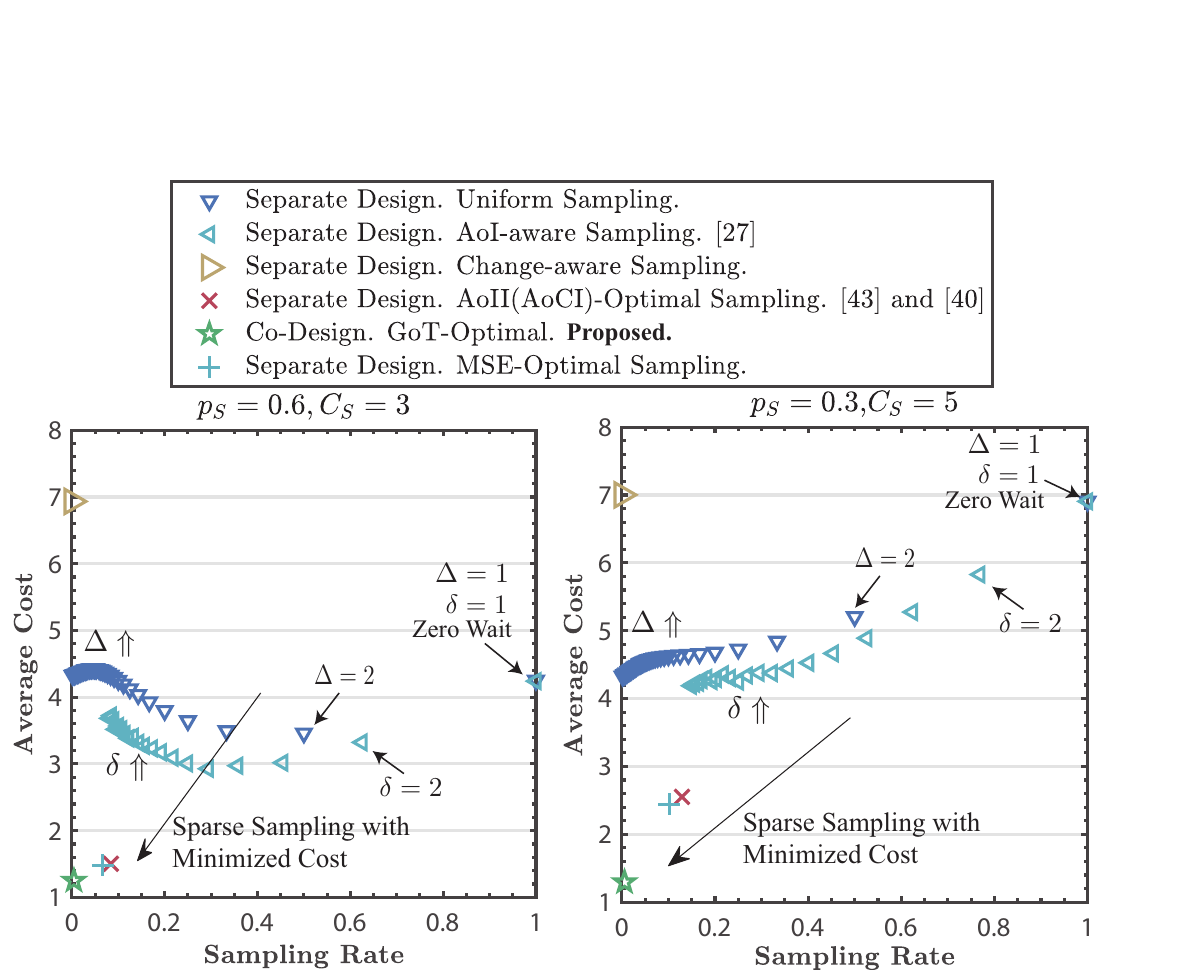}
		\caption{Average Cost vs. Sampling Rate under different policies and parameters setup. Here we set $C_g=8$ and $C_I=1$.}\label{RatevsCostComparision}
	\end{minipage}
	\begin{minipage}[t]{0.48\textwidth}
			\centering
		\includegraphics[angle=0,width=1\textwidth]{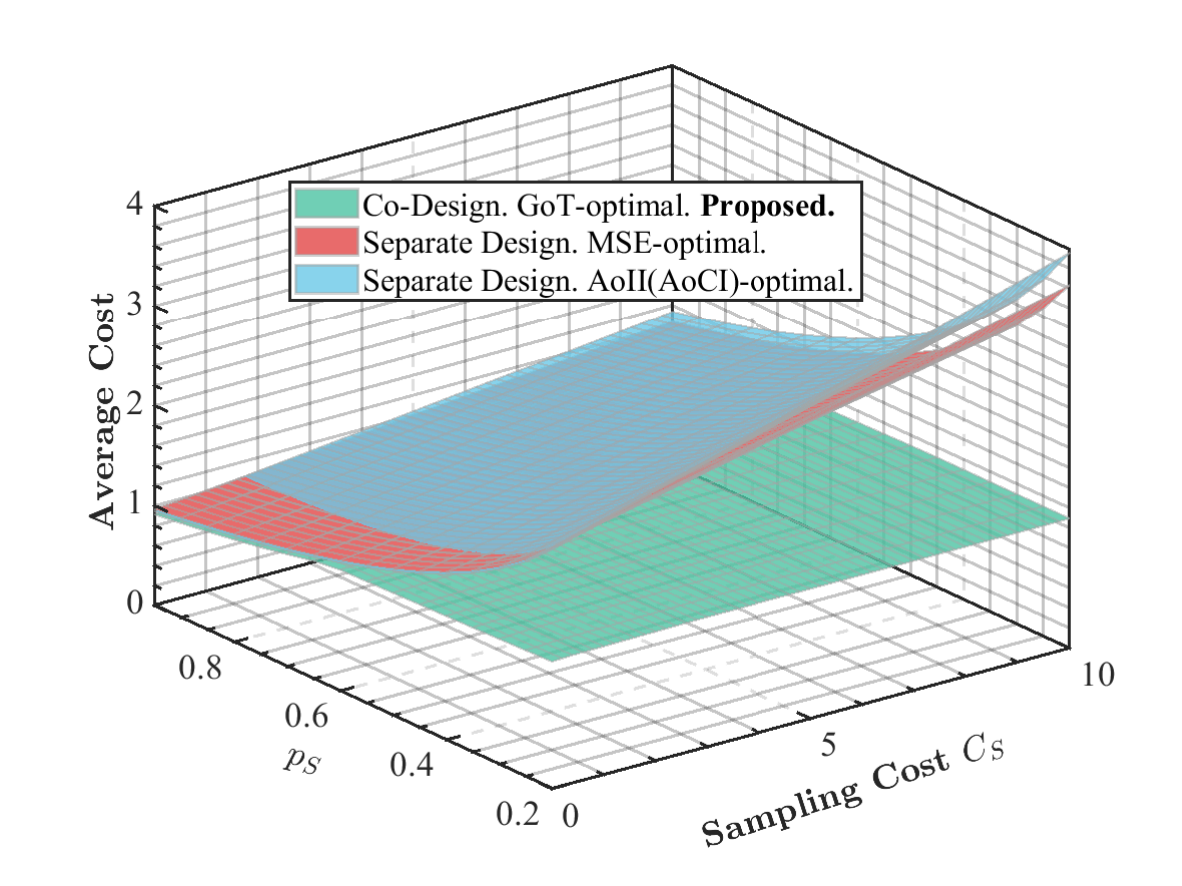}
		\caption{Comparisons among GoT-optimal, AoII-optimal, and MSE-optimal policies. Here we set $C_g=8$ and $C_I=1$.}\label{Costcomparison}
	\end{minipage}
	\begin{minipage}[t]{0.48\textwidth}
	\centering
\includegraphics[angle=0,width=1\textwidth]{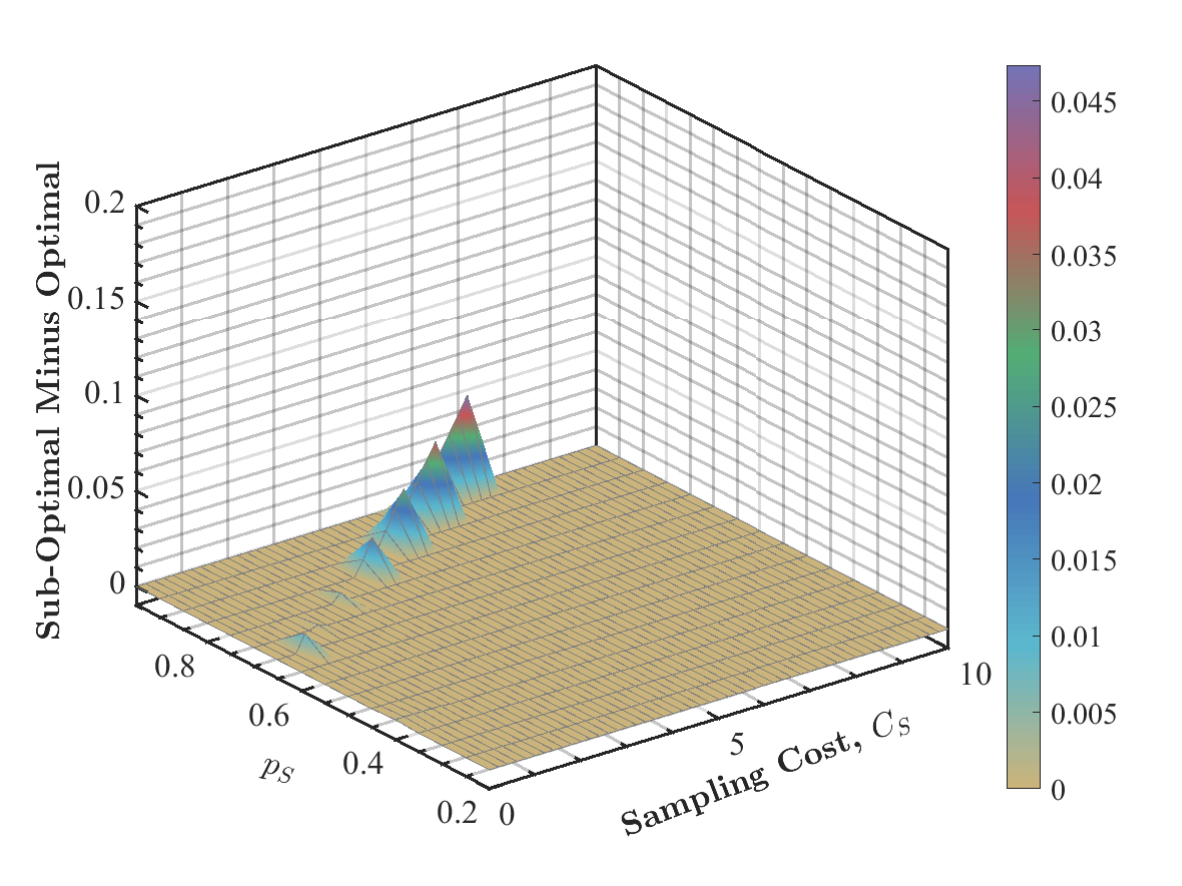}
\caption{Distance between optimal solutions through RVI-Brute-Force-Search algorithm and Sub-optimal solutions through JESP algorithm.}\label{Ovssub}
\end{minipage}
\begin{minipage}[t]{0.48\textwidth}
		\centering
	\includegraphics[angle=0,width=1\textwidth]{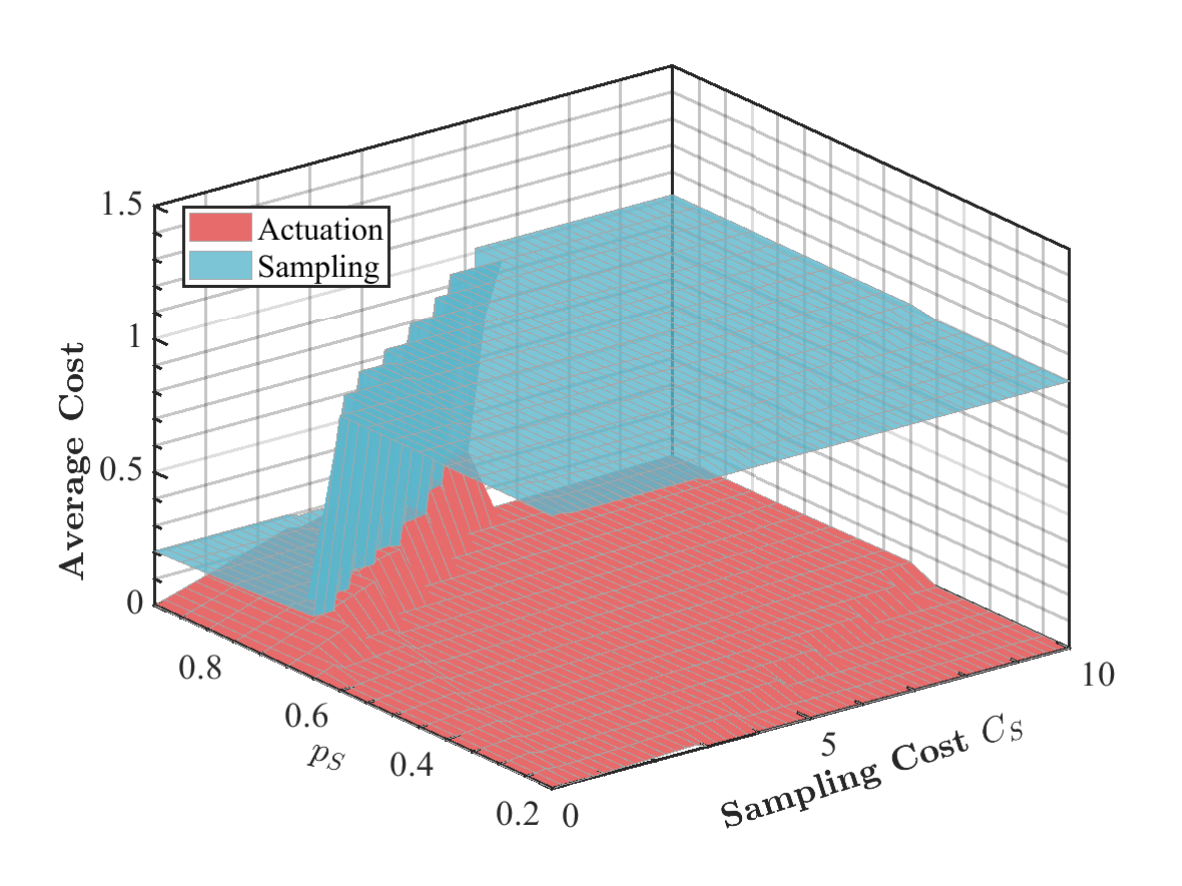}
	\caption{Trade-off between long-term average transmission cost and long-term average actuation cost.}\label{tradeoff}
\end{minipage}
\end{figure}

However, we notice that sampling and actuation are closely intertwined, highlighting the potential for further co-design. In this paper, we have proposed the RVI-Brute-Force-Search and the Improved JESP algorithms for such optimal co-design. As shown in Fig. \ref{RatevsCostComparision}, the \emph{sampler \& decision-maker} co-design achieves the optimal utility through sparse sampling. Specifically, only semantically important information is sampled and transmitted, while non-essential data is excluded. This goal-oriented, semantic-aware, and sparse sampling design represents a significant advancement in sampling policy design. By incorporating a best-matching decision-making policy, the sparse sampling achieves superior performance compared to existing methods.

{Fig. \ref{Costcomparison} presents a comparative analysis between the AoII (or AoCI)-optimal sampling policy, the MSE-optimal sampling policy, and our proposed GoT-optimal \emph{sampler \& decision-maker} co-design. It is verified that under different $p_S$ and $C_S$, the proposed GoT-optimal \emph{sampler \& decision-maker} co-design achieves the optimal goal-oriented \emph{utility}. Importantly, under the condition of an extremely unreliable channel $p_S=0.2$ and high sampling cost $C_S=10$, the proposed co-design facilitates a significant reduction in long-term average cost, exceeding 60\%. This underscores the superiority of the GoT-optimal \emph{sampler \& decision-maker} co-design.}

\subsection{Optimal vs. Sub-Optimal}
Fig. \ref{Ovssub} presents a comparative visualization between optimal and sub-optimal solutions over a wide range of $C_S$ and $p_S$ values. The negligible zero-approaching value in Fig. \ref{Ovssub} implies a trivial deviation between the optimal and sub-optimal solutions, suggesting the latter's potential for convergence towards near-optimal outcomes. The minimal variance testifies to the sub-optimal algorithm's consistent ability to approximate solutions with high proximity to the optimal. This critical observation underscores the practical advantages of employing sub-optimal improved JESP Algorithm, especially in scenarios with extensive $\mathcal{A}_A$ and $\mathcal{O}_A$.
\subsection{Trade-off: Transmission vs. Actuation}\label{sectionVC}
Fig. \ref{tradeoff} exemplifies the resource allocation trade-off between transmission and actuation when the long-term average cost is minimized. When the probability of $p_S$ remains low (signifying an unreliable channel) or $C_S$ is high (indicating expensive sampling), it becomes prudent to decrease sampling and transmission, while concurrently augmenting actuation resources for optimal system \emph{utility}. In contrast, when the channel is reliable, sampling and transmission resource can be harmonized with actuation resources to achieve the goal better. This indicates that through the investigation of the optimal co-design of the \emph{sampler \& decision-maker} paradigm, a trade-off between transmission and actuation resources can be achieved.

\section{Conclusion}\label{sectionVI}
In this paper, we have investigated the GoT metric to directly describe the goal-oriented system decision-making \emph{utility}. Employing the proposed GoT, we have formulated an infinite horizon Dec-POMDP problem to accomplish the co-design of sampling and actuating. To address this problem, we have developed two algorithms: the computationally intensive RVI-Brute-Force-Search, which is proven to be optimal, and the more efficient, albeit suboptimal algorithm, named JESP Algorithm. Comparative analyses have substantiated that the proposed GoT-optimal \emph{sampler \& decision-maker} pair can achieve sparse sampling and meanwhile maximize the \emph{utility}, signifying the initial realization of a sparse, goal-oriented, and semantics-aware sampler design.

\bibliographystyle{IEEEtran}
\bibliography{reference}

\begin{appendices}
	\section{The Proof of Lemma \ref{as11}}\label{prooflemma1}
	By taking into account the \textit{conditional independence} among $X_{t+1}$, $\Phi_{t+1}$, and $X_{t+1}$, given $(X_{t},\Phi_{t},X_{t})$ and $\mathbf{a}(t)$, we we can express the following:
	
\begin{equation}\label{wwt}
\small
\begin{aligned}
&\Pr\left\{\mathbf{W}_{t+1}=(s_u,x,v_r)\left|\mathbf{W}_t=(s_i,s_j,v_k),\mathbf{a}(t)=(1,a_m)\right.\right\}\\
&=\Pr\left\{X_{t+1}=s_u\left|\mathbf{W}_t=(s_i,s_j,v_k),\mathbf{a}(t)=(1,a_m)\right.\right\}\times\\
&\Pr\left\{\hat{X}_{t+1}=x\left|\mathbf{W}_t=(s_i,s_j,v_k),\mathbf{a}(t)=(1,a_m)\right.\right\}\times\\
&\Pr\left\{\Phi_{t+1}=v_r\left|\mathbf{W}_t=(s_i,s_j,v_k),\mathbf{a}(t)=(1,a_m)\right.\right\},
\end{aligned}
\end{equation}
wherein the first, second, and third terms can be derived through \emph{conditional independence}, resulting in simplified expressions of (\ref{Source}), (\ref{as1}), and (\ref{context}), respectively:
\begin{equation}\label{1}
\small
\begin{aligned}
	\Pr\left\{X_{t+1}=s_u\left|\mathbf{W}_t=(s_i,s_j,v_k),\mathbf{a}(t)=(1,a_m)\right.\right\}=p_{i,u}^{(k,m)},
\end{aligned}
\end{equation}
\begin{equation}\label{2}
\small
\begin{aligned}
	\Pr\left\{\hat{X}_{t+1}=x\left|\mathbf{W}_t=(s_i,s_j,v_k),\mathbf{a}(t)=(1,a_m)\right.\right\}=p_S\cdot\mathbbm{1}_{\left\{x=s_i\right\}}+(1-p_S)\cdot\mathbbm{1}_{\left\{x=s_j\right\}},
\end{aligned}
\end{equation}
\begin{equation}\label{3}
\small
\begin{aligned}
\Pr\left\{\Phi_{t+1}=v_r\left|\mathbf{W}_t=(s_i,s_j,v_k),\mathbf{a}(t)=(1,a_m)\right.\right\}=p_{k,r},
\end{aligned}
\end{equation}
Substituting (\ref{1}), (\ref{2}), and (\ref{3}) into (\ref{wwt}) yields the (\ref{as11}) in Lemma \ref{as11}.
In the case where $a_S(t)=0$, we can obtain a similar expression by replacing $\mathbf{a}(t)=(1,m)$ with $\mathbf{a}(t)=(0,m)$. Substituting (\ref{Source}), (\ref{as0}), and (\ref{context}) into this new expression results in the proof of (\ref{as00}) in Lemma \ref{as11}.

	\section{Proof of Lemma 3}\label{prooflemma3}
	$Q_{\pi_A}^{\pi_S}(\mathbf{w},a_A)$ could be simplified as follows:
	\begin{equation}
	\small
	\begin{aligned}
	&{\mathop {\lim \sup }\limits_{T \to \infty } \frac{1}{T}{\mathbb{E}^{(\pi_S,\pi_A)} }\left[{\sum\limits_{t = 0}^{T - 1} \left(r(t)-\eta_{\pi_A}^{\pi_S}\right) }\left|\mathbf{W}_0=\mathbf{w},a_A(0)=a_A\right. \right]}\\
	&=\mathcal{R}^{\pi_S}(\mathbf{w},a_A)-\eta_{\pi_A}^{\pi_S}+\sum_{\mathbf{w}'\in\mathcal{I}}p^{\pi_S}(\mathbf{w}'|\mathbf{w},a_A)\cdot\underbrace{{\mathop {\lim \sup }\limits_{T \to \infty } \frac{1}{T-1}{\mathbb{E}^{(\pi_S,\pi_A)} }\left[{\sum\limits_{t = 1}^{T - 1} \left(r(t)-\eta_{\pi_A}^{\pi_S}\right) }\left|\mathbf{W}_1=\mathbf{w}'\right. \right]}}_{g_{\pi_A}^{\pi_S}\mathbf{(w}')}\\
	&=\mathcal{R}^{\pi_S}(\mathbf{w},a_A)-\eta_{\pi_A}^{\pi_S}+\sum_{\mathbf{w}'\in\mathcal{I}}p^{\pi_S}(\mathbf{w}'|\mathbf{w},a_A)g_{\pi_A}^{\pi_S}\mathbf{(w}'),\\
	\end{aligned}
	\end{equation}
	$Q_{\pi_A}^{\pi_S}(o_A,a_A)$ could be solved as:
	\begin{equation}
	\small
	\begin{aligned}
	&{\mathop {\lim \sup }\limits_{T \to \infty } \frac{1}{T}{\mathbb{E}^{(\pi_S,\pi_A)} }\left[{\sum\limits_{t = 0}^{T - 1} \left(r(t)-\eta_{\pi_A}^{\pi_S}\right) }\left|o_A^{(0)}=o_A,a_A(0)=a_A\right. \right]}\\
	&=\sum_{\mathbf{w}\in\mathcal{I}}p_{\pi_A}^{\pi_S}(\mathbf{w}|o_A,a_A)\cdot\underbrace{{\mathop {\lim \sup }\limits_{T \to \infty } \frac{1}{T-1}{\mathbb{E}^{(\pi_S,\pi_A)} }\left[{\sum\limits_{t = 0}^{T - 1} \left(r(t)-\eta_{\pi_A}^{\pi_S}\right) }\left|\mathbf{W}_0=\mathbf{w},a_A(0)=a_A\right. \right]}}_{Q_{\pi_A}^{\pi_S}(\mathbf{w},a_A)}\\
	&=\sum_{\mathbf{w}\in\mathcal{I}}p_{\pi_A}^{\pi_S}(\mathbf{w}|o_A,a_A)Q_{\pi_A}^{\pi_S}(\mathbf{w},a_A).
	\end{aligned}
	\end{equation}
	where $p_{\pi_A}^{\pi_S}(\mathbf{w}|o_A,a_A)$ is the posterior conditional probability. Note that the state $\mathbf{w}$ is independent of $a_A$ when $o_A$ is known, we have that
	\begin{equation}
	\small
	\begin{aligned}
		&p_{\pi_A}^{\pi_S}(\mathbf{w}|o_A,a_A)=p_{\pi_A}^{\pi_S}(\mathbf{w}|o_A)=\frac{p_{\pi_A}^{\pi_S}(\mathbf{w},o_A)}{p_{\pi_A}^{\pi_S}(o_A)}=\frac{\boldsymbol{\mu}_{\pi_A}^{\pi_S}(\mathbf{w})p(o_A|\mathbf{w})}{\sum_{\mathbf{w}\in\mathcal{I}}\boldsymbol{\mu}_{\pi_A}^{\pi_S}(\mathbf{w})p(o_A|\mathbf{w})}.
	\end{aligned}
	\end{equation}

\end{appendices}

\end{document}